\newcommand{\abs}[1]{\left\vert#1\right\vert}
\renewcommand{\mid}{\,|\,}
\newcommand{\ubar}[1]{\underaccent{\bar}{#1}}
\begin{document}

\title{Efficient hedging under ambiguity in continuous time}\thanks{Heartfelt thanks are due to Daniel Bartl for fruitful discussions.}

\author[]{Ludovic Tangpi}

\abstract{
It is well known that the minimal superhedging price of a contingent claim is too high for practical use. %
In a continuous-time model uncertainty framework, we consider a relaxed hedging criterion based on acceptable shortfall risks.
Combining existing aggregation and convex dual representation theorems, we derive duality results for the minimal price on the set of upper semicontinuous discounted claims.
}

\date{\today}
\keyAMSClassification{91B30, 91G80, 60H30, 60G48.}
\keyWords{Superhedging, model ambiguity, acceptance set, risk measure, optimized certainty equivalent, volatility uncertainty.}
\maketitleludo
\setcounter{page}{1} %

\section{Introduction}
In this paper, we are concerned with convex duality for the minimal superhedging problem with non-zero shortfall risk, in continuous time.
In a financial market with underlying $S$, the minimal superhedging price $\phi(X)$ of a discounted contingent claim $X$ is the smallest cost $m$ needed to form a superhedging portfolio.
That is, to find an admissible strategy $Z$ such that
\begin{equation}
\label{eq:hedge}
	m + (Z\cdot S)_T \ge X,
\end{equation}
where $(Z\cdot S)_T$ is the total gain up to time $T \in (0,\infty)$ from trading $S$. 
A classical result at the heart of mathematical finance gives conditions guaranteeing a pricing-hedging duality, i.e. ensuring that $\phi(X)$ is the largest non-arbitrage price of $X$, see e.g. \citet{DS94} for details, and \citet{Kra-Sch} for applications to portfolio optimization.

In the presence of model ambiguity, i.e. when the negligible events do not stem from a single measure, the pricing hedging-duality has attracted a sustained of attention.
Notably under models with volatility uncertainty, such duality results have been derived e.g. by \citet{peng_2010}, \citet{denis06} and \citet{STZ1,STZ2} for contingent claims that are (versions of) continuous random variables.
A crucial step to derive most of these results is to prove a dynamic programming principle. 
This requires the (``dynamic version'' of the functional) $\phi$ to be time-consistent.
\citet{Neu-Nutz} have extended these representations (and the dynamic programming principles) to measurable claims using the theory of analytic sets.
In the model-free framework, i.e. when no probabilistic assumption is made, superhedging duality results include those in \cite{Acciaio2016,bei-hl-pen,robhedging,Bur-Fri-Mag17,Burz16} in discrete time and \cite{Dolinsky2014,Hou2015,liminf,Bar-Neu-Kup17} in continuous time.

It is well-known that the minimal superhedging price is too high for practical use, and even higher under model uncertainty.
This motivated the notion of quantile-hedging introduced by \citet{Foel-Leu,Foel-Leu2000} and further developed into risk-based approaches by \citet{Arai10} and \citet{Rud07}; and into no-good deal based valuations by \citet{Nadal}.
We also refer to \citet{Bec-Ken17} for the analysis of robust no-good deals.
More precisely, this consists into substituting the (strict) superhedging requirement \eqref{eq:hedge} by the relaxed condition

\begin{equation}
\label{eq:risk-hedge}
	m + (Z\cdot S)_T-X \in {\cal A},
\end{equation}
where ${\cal A}$ is the acceptance set of a convex monetary risk measure, or a set of acceptable discounted financial positions.
Adjusting the set ${\cal A}$ allows to change the level of risk aversion.
Under model uncertainty, superhedging dualities in such a situation have been investigated by \citet{robhedging} in the discrete time model-free framework.

The goal of the present paper is to investigate the continuous-time case when a set ${\cal P}$ of possible reference measures on the canonical space $C([0,T],\mathbb{R}^d)$ is fixed.
Notice that if the risk measure with acceptance set ${\cal A}$ is not time-consistent, then the resulting superhedging functional is not necessarily time-consistent anymore, rendering the dynamic programming approach prevalent in the literature harder to apply.
The proposed argument is based on results by \citet{Robdual} which give conditions under which a continuity from below condition (also known as Fatou property) yields a representation of convex monotone functions.
More precisely, we show that a suitable sequential closedness of the acceptance set ${\cal A}$ carries over to the sublevel sets of the superheding functional, guaranteeing enough regularity to derive a convex dual representation; see Theorem \ref{thm:main} for a precise statement.
This will require the use of aggregation results developed by \citet{STZ10}.
As application, the case where the shortfall risk is quantified by a robust optimized certainty equivalent is systematically studied.
Recall that this risk measure is not time-consistent, except in the case where it corresponds to the entropic risk measure.

In the next section, we precise the probabilistic setting and the main results of the paper.
Namely, a convex dual representation for the superhedging functional for upper semicontinuous claims when the shortfall risk is quantified by a risk measure whose acceptance set satisfies some integrability property.
As example, the case of robust optimized certainty equivalent is studied in details, since this class of risk measures includes a large number of example, see e.g. \cite{Roboptim,Ben-Teb}.
All the proofs are given in Section \ref{sec:proofs}, and an appendix contains some technical concepts from \cite{STZ10}.

\section{Setting and main results}
\subsection{Probabilistic setting}

The findings of this work rely on a representation results of \citet{Robdual} and the aggregation results of \citet{STZ10} from which we borrow the probabi{}listic setting.
More precisely, fix $T\in (0,\infty), d \in \mathbb{N}\setminus\{0\}$ and let $\Omega$ be the canonical space of $\mathbb{R}^d$-valued continuous paths on $[0,T]$ with $\omega_0=0$.
Let $P_0$ be the Wiener measure on $\Omega$ and $S$ the canonical process, with natural filtration $\mathbb{F}^S=({\cal F}^S_t)_{t\in [0,T]}$.
By \citet{karandikar83},
there exists an $\mathbb {F}^S$-adapted, continuous 
process $\langle S \rangle$, such that $\langle S \rangle_t=\langle S \rangle_t^Q$ $Q$-a.s. 
for all $t\in[0,T]$, and every local martingale measure $Q$ of $S$, where $\langle S\rangle^Q$ denotes the $Q$-quadratic variation of $S$. 
Let $\hat{a}$ be the density of the quadratic variation $\langle S\rangle$ given by
\begin{equation*}
	\hat{a}_t := \limsup_{\varepsilon \downarrow 0}\frac{1}{\varepsilon}\left(\langle S\rangle_t - \langle S\rangle_{t-\varepsilon} \right).
\end{equation*}
 We denote by ${\cal M}(S)$ the set of all local martingale measures $P$ such that $P$-a.s., $\langle S\rangle_t$ is absolutely continuous in $t$ and $\hat{a}$ is valued in the set $\mathbb{S}_d^{>0}$ of symmetric positive definite matrices.
 For every $P \in {\cal M}(S)$ and every integrable $\mathbb{F}^S$-progressively measurable process $a$ taking values in $\mathbb{S}_d^{>0}$, and such that $a = \hat {a}$ $P$-a.s. (such process $a$ is called diffusion coefficient),
$P$ is a weak solution of the SDE
 \begin{equation}
 \label{eq:sde}
 	dY_t = a^{1/2}(Y_\cdot)dS_t \quad P_0\text{-a.s.} 
 \end{equation}
 with initial value $P(S_0=0)= 1$.
 In particular, $S$ is a $P$-local martingale.
 By \cite{karatzas01}, the SDE \eqref{eq:sde} admits a unique weak solution for every bounded process with values in $ \mathbb{S}_d^{>0}$.
Let $A_0$ be a generating class of diffusion coefficients (see Definition \ref{def:generating}) such that every $a \in A_0$ is bounded and $P^a$ satisfies the martingale representation property.
Further let  $A$ be a separable class of diffusion coefficients generated by $A_0$, see Definition \ref{def:separable}, and put 
$${\cal P}: = \{P^a: a \in A\}.$$
We consider the set ${\cal P}$ as the set of reference probability measures. 
For every $P \in {\cal P}$, let $\mathbb{F}^P:=({\cal F}_t^{P})_{t\in [0,T]}$ 
be the $P$-completion of the right continuous version 
of the filtration $\mathbb{F}^S$, and denote by $\mathbb{F}: =({\cal F}_t)_{t\in [0,T]}$ the universal filtration given by ${\cal F}_t:= \bigcap_{P \in {\cal P}}({\cal F}^P_t\vee{\cal N}^{\cal P})$, where ${\cal N}^{\cal P}$ is the collection of $P$-null sets for all $P \in {\cal P}$.

Let $L^0({\cal P})$ be the space of ${\cal F}_T$-measurable random variables which are identified if they agree ${\cal P}$-q.s.\footnote{Hereby ${\cal P}$-q.s. means $P$-a.s. for every $P \in {\cal P}$. Unless otherwise stated, all equalities and inequalities between random variables will be understood in this sense.} and, given $p \in [1, \infty)$, we denote by $L^p({\cal P})$ the space of random variables $X \in L^0({\cal P})$ such that $E_P[\abs{X}^p]<\infty$ for all $P \in {\cal P}$.
Further let $L^\infty({\cal P})$ be the subspace of $L^p({\cal P})$ equipped with the norm $||X||_{\infty}:= \inf\{m>0: \sup_{P \in {\cal P}}P(|X|>m)=0\}$.

\subsection{Main results}
\label{sec:main results}
For every progressively measurable $\mathbb{R}^d$-valued processes $Z$ such that $\int_0^T\abs{Z_t}^2\,d\langle S\rangle_t < \infty$, we denote by $\int Z\,dS$ the usual It\^o's integral which implicitly depends on $P\in {\cal P}$ and by $M^Z$ the ${\cal P}$-q.s. unique $\mathbb{F}$-progressively measurable process such that $M^Z = \int Z\,dS$ $P$-a.s. for all $P \in {\cal P}$ (see \cite[Theorem 6.4]{STZ10}).
This defines a ${\cal P}$-local martingale, that is, a $P$-local martingale under each $P \in {\cal P}$.
The (admissible) gains and losses from trading in the financial market modeled by $S$ are given by the set
\begin{equation*}
 	G := \left\{M^Z_T: \int_0^t Z_u\,dS_u\ge -c \text{ for all } t \in [0,T], \text{ for some } c>0 \right\}
\end{equation*}
and the minimal superhedging cost $\phi(X)$ of a contingent claim $X \in L^0({\cal P})$, is given by
\begin{equation}
 \label{eq:superhedging}
 	\phi(X) := \inf\{m\in \mathbb{R}: m+ Y \ge X \text{ for some } Y\in G\}.
\end{equation}
Fix a non-empty, convex set ${\cal A} \subseteq L^0({\cal P})$ and containing $L^0_+({\cal P})$ assumed to be monotone\footnote{i.e. for every $X, X' \in L^1({\cal P})$ with $X\ge X'$ and $X' \in {\cal A}$, we have $X \in {\cal A}$.}.
The functional $\psi$ given by
\begin{equation}
\label{eq:defpsi}
 	\psi(X):= \inf\{m\in\mathbb{R}: m + Y -X \in {\cal A}\text{ for some }  Y \in G\}
\end{equation}
defines the minimal cost to be paid to construct a portfolio with a shortfall that lies in the set ${\cal A}$ but that may fail to superhedge the claim $X$ (in the ${\cal P}$-q.s. sense), with the convention $\inf \emptyset:=+\infty$.
Our aim is to derive the dual representation of the functional $\psi$.

Let $C_b$ and $U_b$ be the space of bounded continuous functions and bounded upper semicontinuous functions on $\Omega$, and $C_b({\cal P})$ and $U_b({\cal P})$ the set of elements of $L^\infty({\cal P})$ with a ${\cal P}$-q.s. version in $C_b$ and $U_b$, respectively.
Put
\[
\psi^\ast(Q):=\sup_{X \in C_b}(E_Q[X]-\psi(X))\quad \text{and} \quad \phi^\ast(Q):=\sup_{X \in C_b}(E_Q[X]-\phi(X))
\]
and denote by $\rho_{\cal A}$ and $\rho^*_{\cal A}$ respectively, the risk measure  associated to ${\cal A}$ and its conjugate, i.e.
\[
	\rho_{\cal A}(X):=\inf\{m: m+ X \in {\cal A}\} \quad \text{and}\quad \rho_{\cal A}^\ast(Q):=\sup_{X \in C_b}(E_Q[X]-\rho_{\cal A}(X)).	
\]
Further define
 $ca^+_1$, the set of Borel probability measures on $\Omega$, 
$ca^+_1(\tilde \Omega)$ its subset containing probability measures with support included in $\tilde{\Omega}:= \text{supp}({\cal P})$, the support\footnote{Here,
$\text{supp}({\cal P})$ is the unique closed set $\tilde{\Omega}$ for which $P(\tilde{\Omega}^c) = 0$ for all $P \in {\cal P}$ and 
$P(\tilde{\Omega}\cap O)>0$ for some $P \in {\cal P}$ whenever $O$ is open and $\tilde{\Omega}\cap O \neq \emptyset$. It can be checked that $\text{supp}({\cal P}) = \cup_{P \in {\cal P}}\text{supp}(P)$, where $\text{supp}(P)$ exists, since $P$ is a regular measure.} of ${\cal P}$ and by ${\cal M}^{\cal A}_{{\cal P}}(S)$ the set of probability measures $Q \in ca^+_1(\tilde \Omega)$ such that $S$ is a $Q$-local martingale and it holds $\rho^*_{\cal A}(-Q)<\infty$.
\begin{theorem}
\label{thm:main}
	Assume that the set ${\cal A}^-:= \{A^-: A\in {\cal A}\}$ is ${\cal P}$-uniformly integrable\footnote{i.e. it is $P$-uniformly integrable for each $P \in {\cal P}$; and $A^-:=\max(0, -A)$.}; the sublevel sets $\{ Q\in ca^+_1:\rho^*_{\cal A}(-Q)\le c\}$, $c\ge 0$ are weakly compact and $\liminf_{n \to \infty}A^n \in {\cal A}$ for every sequence $(A^n)$ in $ {\cal A}$ that is bounded in $L^1({\cal P})$.
	Then, if $\psi(0) > -\infty$, the functional $\psi$ is real-valued on $L^\infty({\cal P})$ and satisfies the representation\footnote{In \eqref{eq:representation_psi}, $E_Q[X]$ is understood as $E_Q[X']$, for any $X' \in C_b$ with $X = X'$ ${\cal P}$-q.s.
    This expectation is uniquely defined, see \cite[Lemma 4.5.1]{diss}.}
	\begin{equation}
	\label{eq:representation_psi}
	 	\psi(X) = \sup_{Q \in {\cal M}^{\cal A}_{{\cal P}}(S)}\left(E_Q[X] - \psi^\ast(Q) \right), \quad X \in C_b({\cal P}) 
	\end{equation}
	with $\psi^\ast(Q) = \phi^\ast(Q) + \rho^\ast_{{\cal A}}(-Q)$, for all $Q \in ca^+_1(\tilde \Omega)$.

	Moreover, if
	\[
		\psi^\ast(Q)=\sup_{X \in U_b}(E_Q[X]-\psi(X)),
	\]
	then one has 
	\begin{equation}
	\label{eq:representation_psi_Ub}
	 	\psi(X) = \sup_{Q \in {\cal M}^{\cal A}_{{\cal P}}(S)}\left(E_Q[X] - \psi^\ast(Q) \right), \quad X \in U_b({\cal P}).
	\end{equation}
\end{theorem}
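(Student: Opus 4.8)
The plan is to regard $\psi$ as the monetary risk measure with acceptance set ${\cal A}_\psi:=\{X\in L^\infty({\cal P}):\psi(X)\le 0\}=\{M^Z_T-A:\ Z\ \text{admissible},\ A\in{\cal A}\}\cap L^\infty({\cal P})$, to verify that $\psi$ meets the hypotheses of the representation theorem of \cite{Robdual}, and finally to identify the conjugate $\psi^\ast$ and the set on which it is finite. That $\psi$ is convex follows from convexity of $G$ and of ${\cal A}$; monotonicity and cash-additivity are immediate from the definition of $\psi$ and from ${\cal A}$ being monotone; and $\psi$ is real-valued on $L^\infty({\cal P})$ because $0\in G$ and $L^0_+({\cal P})\subseteq{\cal A}$ give $\psi(X)\le\|X\|_\infty$, while monotonicity, cash-additivity and the standing assumption $\psi(0)>-\infty$ give $\psi(X)\ge\psi(0)-\|X\|_\infty>-\infty$.

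The heart of the argument is the continuity from below required by \cite{Robdual}: $\psi(X_n)\uparrow\psi(X)$ whenever $X_n\uparrow X$ in $C_b({\cal P})$ (more generally in $L^\infty({\cal P})$). Only "$\ge$" needs proof. Set $\ell:=\lim_n\psi(X_n)$, assume $\ell<\infty$, fix $\delta>0$, and choose admissible $Z_n$ with $A_n:=(\ell+\delta+1/n)+M^{Z_n}_T-X_n\in{\cal A}$. Each $M^{Z_n}$ is bounded below, hence a ${\cal P}$-supermartingale, so $E_P[M^{Z_n}_T]\le 0$ for all $P\in{\cal P}$; together with $\sup_n\|X_n\|_\infty<\infty$ this bounds $\sup_n E_P[A_n]$, and since $A_n^-\in{\cal A}^-$, the ${\cal P}$-uniform integrability of ${\cal A}^-$ bounds $\sup_n E_P[A_n^-]$ as well. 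Hence $(A_n)$ — and therefore the admissible terminal gains $h_n:=M^{Z_n}_T=A_n-(\ell+\delta+1/n)+X_n$ — are bounded in $L^1({\cal P})$ with uniformly integrable negative parts. At this point I run a Koml\'os/Delbaen--Schachermayer closedness argument, carried out ${\cal P}$-quasi-surely by means of the aggregation results of \cite{STZ10}: passing to forward convex combinations (the same weights applied to $A_n$, $X_n$, $h_n$) yields an admissible integrand $Z$ and ${\cal P}$-q.s.\ limits $\widetilde h_n\to\widetilde h$ with $\widetilde h\le M^Z_T$, $\widetilde X_n\to X$, and $\widetilde A_n\to A^\ast:=\widetilde h-X+(\ell+\delta)$; as $(\widetilde A_n)$ is bounded in $L^1({\cal P})$ and $\liminf_n\widetilde A_n=A^\ast$, the closedness assumption on ${\cal A}$ gives $A^\ast\in{\cal A}$. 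Then $(\ell+\delta)+M^Z_T-X\ge A^\ast\in{\cal A}$, so monotonicity of ${\cal A}$ yields $\psi(X)\le\ell+\delta$; letting $\delta\downarrow 0$ finishes. I expect this closedness step — making the Koml\'os/Delbaen--Schachermayer compactness argument work under volatility uncertainty through the aggregation machinery — to be the main obstacle; it is exactly where the uniform integrability of ${\cal A}^-$ and the $\liminf$-closedness of ${\cal A}$ are used.

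By \cite{Robdual}, the preceding steps yield \eqref{eq:representation_psi} with the supremum running over $\{Q\in ca^+_1(\tilde\Omega):\psi^\ast(Q)<\infty\}$, the weak compactness of the sublevel sets of $\rho^\ast_{\cal A}(-\cdot)$ — hence, once the conjugate is identified, of those of $\psi^\ast$ — making the supremum a maximum. To identify $\psi^\ast$, observe first that $\psi(X_1+X_2)\le\phi(X_1)+\rho_{\cal A}(-X_2)$ for $X_1,X_2\in C_b$: if $m_1+M^Z_T\ge X_1$ and $m_2-X_2\in{\cal A}$ then $m_1+m_2+M^Z_T-(X_1+X_2)\ge m_2-X_2\in{\cal A}$, so the left side lies in ${\cal A}$ by monotonicity. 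Testing at near-optimizers of $\phi^\ast$ and of $\rho^\ast_{\cal A}(-\cdot)$ gives $\psi^\ast(Q)\ge\phi^\ast(Q)+\rho^\ast_{\cal A}(-Q)$. For the reverse inequality one checks $\psi(X)=\inf\{\rho_{\cal A}(M^Z_T-X):Z\ \text{admissible}\}$; if $\phi^\ast(Q)=\infty$ there is nothing to prove, and otherwise the standard superhedging argument shows $S$ is a $Q$-local martingale, $Q\in ca^+_1(\tilde\Omega)$ and $\phi^\ast(Q)=0$, so the $Q$-supermartingale property of $M^Z$ gives $E_Q[M^Z_T]\le 0$ and hence $E_Q[X]-\psi(X)\le\sup_W\big(E_Q[-W]-\rho_{\cal A}(W)\big)=\rho^\ast_{\cal A}(-Q)$. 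Here I invoke \cite[Lemma 4.5.1]{diss} and the ${\cal P}$-uniform integrability of ${\cal A}^-$ to pass between $C_b$, $C_b({\cal P})$ and the larger classes of random variables entering these estimates. This proves $\psi^\ast=\phi^\ast+\rho^\ast_{\cal A}(-\cdot)$ on $ca^+_1(\tilde\Omega)$, whence $\{\psi^\ast<\infty\}={\cal M}^{\cal A}_{\cal P}(S)$ and \eqref{eq:representation_psi} follows.

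Finally, suppose $\psi^\ast(Q)=\sup_{X\in U_b}(E_Q[X]-\psi(X))$. For $X\in U_b({\cal P})$ and $Q\in{\cal M}^{\cal A}_{\cal P}(S)$, the definition of $\psi^\ast$ gives $\psi(X)\ge E_Q[X]-\psi^\ast(Q)$, which is "$\ge$" in \eqref{eq:representation_psi_Ub}. For "$\le$", approximate a $U_b$-version of $X$ from above by a decreasing sequence $X_n\in C_b$ (for instance $X_n(\omega):=\sup_{\omega'}(X(\omega')-n\,d(\omega,\omega'))$), apply \eqref{eq:representation_psi} to each $X_n$, and pass to the limit: by monotonicity $\psi(X)\le\inf_n\psi(X_n)=\inf_n\sup_Q(E_Q[X_n]-\psi^\ast(Q))$, and since $Q\mapsto E_Q[X_n]-\psi^\ast(Q)$ is weakly upper semicontinuous, decreasing in $n$ with $E_Q[X_n]\downarrow E_Q[X]$, while the $Q$ relevant to the suprema lie in a fixed weakly compact sublevel set $\{\psi^\ast\le c\}$, a Dini-type minimax argument lets me interchange $\inf_n$ and $\sup_Q$, giving $\psi(X)\le\sup_{Q\in{\cal M}^{\cal A}_{\cal P}(S)}(E_Q[X]-\psi^\ast(Q))$.
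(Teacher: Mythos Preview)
Your overall architecture—verify continuity from below, apply \cite{Robdual}, identify the conjugate—is the same as the paper's, but the crucial closedness step has a gap. You pick near-optimal hedges $Z_n$ for $\psi(X_n)$ and then invoke a Koml\'os/Delbaen--Schachermayer argument to produce a limiting admissible $Z$. The problem is that each $Z_n$ lies only in some $G^{c_n}$, and nothing forces the admissibility constants $c_n$ to be bounded: the $L^1({\cal P})$-bound on the terminal values $M^{Z_n}_T$ (which you correctly derive) says nothing about the running minima $\inf_t M^{Z_n}_t$. Without a uniform lower bound on the paths, neither forward convex combinations nor a pointwise $\liminf$ of the processes $M^{Z_n}$ yields a process that is bounded below, and therefore you cannot conclude it is a ${\cal P}$-supermartingale to which the Doob--Meyer/aggregation machinery of \cite{STZ10} applies. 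This is exactly why the paper introduces the truncated functionals $\psi^c$ (gains in $G^c$, fixed lower bound $-c$): continuity from below and the representation are first proved for each $\psi^c$ (Proposition~\ref{pro:rep-c}), and only afterwards one passes to $\psi=\inf_{c}\psi^c$ via a separate interchange argument in which near-optimal $Q^n$ for $\psi^{n,*}$ are extracted and a weak limit is taken. The weak compactness of the sublevel sets $\{\rho^*_{\cal A}(-Q)\le c\}$—which in your write-up is used only to upgrade the supremum to a maximum—is in fact needed precisely for this interchange.

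A secondary issue concerns your identification of $\psi^*$. You argue that if $\phi^*(Q)<\infty$ then $S$ is a $Q$-local martingale and hence $E_Q[M^Z_T]\le 0$ for admissible $Z$, by ``the $Q$-supermartingale property of $M^Z$''. But $M^Z$ is defined by aggregation \emph{under ${\cal P}$}; for a general $Q\in ca^+_1(\tilde\Omega)$ not absolutely continuous with respect to some $P\in{\cal P}$ it is not clear that $M^Z$ agrees $Q$-a.s.\ with a $Q$-stochastic integral of $Z$, so the supermartingale conclusion is not automatic (compare Lemma~\ref{lem:Q-Y}, which only treats $Q\ll P\in{\cal P}$). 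The paper sidesteps this by obtaining $\psi^{c,*}=\phi^{c,*}+\rho^*_{\cal A}(-\cdot)$ as a purely formal consequence of the inf-convolution identity $\psi^c=\rho_{\cal A}\Box\bar\phi^c(-\cdot)$ (Lemma~\ref{lem:convol}), and then taking the supremum over $c$. Your Dini/minimax argument for the $U_b$-extension, on the other hand, is a legitimate alternative to the paper's route and would go through once the $C_b$-representation is in place.
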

The proof of this result is given in Subsection \ref{sec:proof_thm:main}.
This result is close in spirit to the so-called no-good deal bounds derived by \cite{Bec-Ken17} using the second order backward stochastic differential equations.
\begin{remark}
Since $G$ is convex, the condition ${\cal A}$ convex and monotone ensures that $\psi$ is increasing and convex on the vector space $L^1({\cal P})$. 
When ${\cal A}=L^0_+({\cal P})$, then $\psi$ reduces to the superhedging function $\phi$.
In Theorem \ref{thm:main}, assuming $\psi(0)>-\infty$ can be seen as a market viability condition, it is satisfied for instance if $\mathbb{R}_+\cap (G-{\cal A})  = \{0\}$, compare \cite{robhedging}, or if there is a probability measure $Q$ such that $E_Q[A]\ge 0$ for all $A \in {\cal A}$ and $E_Q[Y]\le 0$ for all $Y  \in G$.
Moreover, the condition ${\cal A}^-$ uniformly integrable prevents, in particular, that $\rho_{{\cal A}}$ attains the value $-\infty$ which is undesirable for a risk measure.
Furthermore, assuming that $\liminf_{n \to \infty}X^n \in {\cal A}$ for every sequence $(X^n)$ in $ {\cal A}$ that is bounded in $L^1({\cal P})$ can be seen as a version of the Fatou's property for risk measures on $L^p$-spaces, see e.g. \cite{Kaina-Ruesch}.
\end{remark}

A particularly interesting case arises when ${\cal A}$ is the acceptance set of a robust optimized certainty equivalent.
More precisely, let $l\colon\mathbb{R}\to\mathbb{R}$ be a loss function satisfying the  usual assumptions
\begin{align}
\label{eq:ass.l}\tag{CIB}
\begin{array}{l}
l \text{ is convex, increasing, bounded from below, and } \\
l(0)=0, \,l^\ast(1)=0,\text{ and } l(x)>x \text{ for $|x|$ large enough}
\end{array}\bigg\}
\end{align}
where $l^*$ denotes the convex conjugate of $l$ defined as
\[ l^\ast(y):=\sup_{x\in\mathbb{R}} (xy-l(x)) \]
for $y\in \mathbb{R}$  and $l^*(+\infty):=+\infty$.
The functional $\rho:L^1({\cal P})\to (-\infty,\infty]$ defined by
\begin{equation}
\label{eq:rob.oce}
	\rho(X) := \inf_{m\in \mathbb{R}}\sup_{P \in {\cal P}}(E_P[l(m-X )]-m)
\end{equation}
 is the analogue, in the context of model ambiguity, of the optimized certainty equivalent risk measure introduced by \cite{Ben-Teb}.
 It satisfies 
 \begin{equation}
 \label{eq:rep_OCE}
  	\rho(X) = \sup_{Q \in ca^+_1}\left(E_Q[-X] - \inf_{P \in {\cal P}}E_P\left[l^*\left(\frac{dQ}{dP}\right) \right] \right),\quad X \in L^\infty({\cal P}),
  \end{equation}
  where $dQ/dP:=\infty$ if $Q$ is not absolutely continuous w.r.t. $P$ and with the understanding $E_P[Z]:=\infty$ whenever $E_P[Z^+]=\infty$, see \cite{Roboptim} for details.
Let us consider the acceptance set
\begin{equation*}
	{\cal A}:= \left\{X \in L^1({\cal P}): \rho(X)\le 0\right\}
\end{equation*}
and  denote by ${\cal M}^l_{{\cal P}}(S)$ the set of probability measures $Q \in ca^+_1(\tilde \Omega)$ such that $S$ is a $Q$-local martingale and it holds $\inf_{P \in {\cal P}}E_P[l^*(dQ/ dP) ]<\infty$.
\begin{theorem}
\label{thm:psiROCE}
	Assume that $l$ satisfies (CIB), there exist $a,b\ge 0$ and $p>2$ such that $l(x) \ge a|x|^p + b$ and $\psi(0)>-\infty$.
	If ${\cal P}$ is weakly compact, then, it holds
	\begin{align}
	\label{eq:psiROCE_Ub}
		\psi(X) &=\sup_{Q \in {\cal M}^l_{\cal P}(S) }(E_Q [X] - \inf_{P\in {\cal P}}E_P[l^*(dQ/dP)]),\quad X \in U_b({\cal P}).
	\end{align}
\end{theorem}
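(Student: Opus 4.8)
The strategy is to deduce Theorem~\ref{thm:psiROCE} from Theorem~\ref{thm:main} applied to the acceptance set ${\cal A}=\{X\in L^1({\cal P}):\rho(X)\le 0\}$, so most of the work lies in identifying the dual data and checking the three standing hypotheses of Theorem~\ref{thm:main}. For the dual data: the robust OCE $\rho$ of \eqref{eq:rob.oce} is cash-invariant (shift the inner variable $m$) and convex (it is the partial minimization over $m$ of the jointly convex map $(X,m)\mapsto\sup_{P\in{\cal P}}E_P[l(m-X)]-m$), hence $\rho_{\cal A}=\rho$ on $L^1({\cal P})$; by \eqref{eq:rep_OCE} and \cite{Roboptim} its conjugate is $\rho^\ast_{\cal A}(-Q)=\inf_{P\in{\cal P}}E_P[l^\ast(dQ/dP)]$. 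Moreover, for any local martingale measure $Q$ of $S$ one has $\phi(X)\ge E_Q[X]$ (for $m+M^Z_T\ge X$ with $M^Z_T\in G$, the process $M^Z$ is bounded from below, hence a $Q$-supermartingale, so $E_Q[M^Z_T]\le 0$ and $m\ge E_Q[X]$) and $\phi(0)=0$, so $\phi^\ast(Q)=0$. Combining these with the identity $\psi^\ast=\phi^\ast+\rho^\ast_{\cal A}$ from Theorem~\ref{thm:main} shows that $\psi^\ast(Q)=\inf_{P\in{\cal P}}E_P[l^\ast(dQ/dP)]$ on ${\cal M}^{\cal A}_{\cal P}(S)$ and that ${\cal M}^{\cal A}_{\cal P}(S)={\cal M}^l_{\cal P}(S)$; hence \eqref{eq:representation_psi_Ub} will take exactly the form \eqref{eq:psiROCE_Ub}.

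It remains to verify the hypotheses of Theorem~\ref{thm:main}, and this is where the growth bound $l(x)\ge a\abs{x}^p+b$ with $p>2$ is used. Uniform integrability of ${\cal A}^-$: given $X\in{\cal A}$, pick an almost optimal $m$ in \eqref{eq:rob.oce}, so that $\sup_{P}E_P[l(m-X)]\le m+\varepsilon$; the growth bound then yields $E_P[\abs{m-X}^p]\le(m+\varepsilon-b)/a$, which on $\{X\le 0\}$ controls $X^-$ in $L^p(P)$ up to order $m^{1/p}$, while comparing $E_P[l(m-X)]$ from below with $l(m)\,P(X\le 0)+b$ shows that when $m$ is large the set $\{X\le 0\}$ is small and in fact $E_P[X^-]=O(m^{-(p-2)})$; splitting into the regimes $m\le m_0$ and $m>m_0$ gives $P$-uniform integrability for every $P\in{\cal P}$. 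Weak compactness of the sublevel sets $\{Q\in ca^+_1:\rho^\ast_{\cal A}(-Q)\le c\}$: since $l$ is finite-valued, $l^\ast$ is superlinear, so de la Vall\'ee-Poussin together with the tightness of the weakly compact set ${\cal P}$ yields tightness of every sequence $(Q_n)$ with $\sup_n\inf_{P}E_P[l^\ast(dQ_n/dP)]<\infty$, Prokhorov extracts a weak limit, and the joint weak lower semicontinuity of the divergence $(Q,P)\mapsto E_P[l^\ast(dQ/dP)]$ passes the bound to the limit; closedness in the Polish space $ca^+_1$ follows in the same way. Fatou property of ${\cal A}$: if $(A^n)\subseteq{\cal A}$ is bounded in $L^1({\cal P})$, then by Jensen the almost optimal $m_n$ satisfy $a\abs{m_n-E_P[A^n]}^p+b\le m_n+\varepsilon$, which since $p>1$ forces $(m_n)$ to stay bounded, and then Fatou's lemma applied inside \eqref{eq:rob.oce} (using $l\ge b$ and the continuity and monotonicity of $l$) gives $\rho(\liminf_n A^n)\le 0$, i.e.\ $\liminf_n A^n\in{\cal A}$; these are precisely the integrability and Fatou properties of robust OCE acceptance sets established in \cite{Roboptim}.

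With the hypotheses of Theorem~\ref{thm:main} in hand and $\psi(0)>-\infty$, the first part gives that $\psi$ is real-valued on $L^\infty({\cal P})$ and satisfies \eqref{eq:representation_psi} on $C_b({\cal P})$. To reach $U_b({\cal P})$ I first record the weak-duality inequality $\psi(X)\ge\sup_{Q\in{\cal M}^{\cal A}_{\cal P}(S)}(E_Q[X]-\psi^\ast(Q))$, valid for every $X$: if $m+M^Z_T-X\in{\cal A}$ with $M^Z_T\in G$, then $M^Z$ is a $Q$-supermartingale (it is bounded from below and $Q$ is a local martingale measure), so $E_Q[M^Z_T]\le 0$, and $\rho(m+M^Z_T-X)\le 0$ gives, by definition of $\rho^\ast_{\cal A}$, that $\rho^\ast_{\cal A}(-Q)\ge E_Q[X-m-M^Z_T]\ge E_Q[X]-m$; since $\phi^\ast(Q)=0$ this reads $m\ge E_Q[X]-\psi^\ast(Q)$, and minimizing over admissible $m$ proves the inequality. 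In particular $E_Q[X]-\psi(X)\le\psi^\ast(Q)$ for all $X\in U_b({\cal P})$, so $\psi^\ast(Q)=\sup_{X\in U_b}(E_Q[X]-\psi(X))$, which is exactly the additional assumption in the second part of Theorem~\ref{thm:main}; hence \eqref{eq:representation_psi_Ub} holds, and by the identifications of the first paragraph it is \eqref{eq:psiROCE_Ub}.

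The main obstacle is the transfer of the polynomial growth of $l$ into the required regularity of ${\cal A}$: establishing both the uniform integrability of ${\cal A}^-$ and the Fatou property of ${\cal A}$ hinges on controlling the inner minimization over $m$ in \eqref{eq:rob.oce}, and it is precisely the two-regime analysis of $m$ (together with the exponent $p-2>0$) that makes the uniform-integrability estimate go through; the rest is a translation between the abstract hypotheses of Theorem~\ref{thm:main} and the concrete robust-OCE data.
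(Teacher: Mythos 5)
Your overall strategy is the same as the paper's: reduce to Theorem \ref{thm:main} by verifying the three hypotheses for the robust-OCE acceptance set, identify $\rho^\ast_{\cal A}(-Q)=\inf_{P\in{\cal P}}E_P[l^\ast(dQ/dP)]$ and $\phi^\ast(Q)=0$ on local martingale measures, and close the $U_b$ case by showing $\psi^\ast=\psi^\ast_{U_b}$ via the inequality $E_Q[Y]\le 0$ for $Y\in G$. The verifications of uniform integrability, compactness of sublevel sets and the Fatou property are in the right spirit (the paper proves them as Lemmas \ref{lem:A.UI} and \ref{lem:A.sublevel} and inside the proof of Theorem \ref{thm:psiROCE}, by slightly different but comparable arguments).

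There is, however, a genuine gap at the step you use twice and treat as immediate: the claim that for a local martingale measure $Q$ of $S$ with $Q\ll P$, $P\in{\cal P}$, the process $M^Z$ with $M^Z_T\in G$ ``is bounded from below, hence a $Q$-supermartingale, so $E_Q[M^Z_T]\le 0$.'' A process bounded from below is a supermartingale only once you know it is a local martingale under $Q$, and that is precisely what is not available here: $M^Z$ is the ${\cal P}$-q.s.\ aggregation of the It\^o integrals $\int Z\,dS$ constructed under each $P\in{\cal P}$, and under a measure $Q$ that is merely absolutely continuous with respect to some $P$ (and in general not in ${\cal P}$) there is no reason for $M^Z$ to be a $Q$-stochastic integral or a $Q$-local martingale. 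The paper flags exactly this point in the remark following Lemma \ref{lem:Q-Y} and devotes that lemma to proving $E_Q[Y]\le 0$ by a genuinely different route: approximate $Z$ by simple integrands (for which $\int \tilde Z^n\,dS$ is a true $Q$-martingale because simple integrals are defined pathwise), stop them at the level $-c-\varepsilon$ to preserve the lower bound, pass to the limit with Fatou's lemma, and then localize. Your weak-duality inequality, the identity $\phi^\ast(Q)=0$, and the equality $\psi^\ast=\psi^\ast_{U_b}$ all rest on this fact, so the gap propagates through the whole second half of your argument; the conclusion is correct, but it needs the approximation argument of Lemma \ref{lem:Q-Y} rather than the one-line supermartingale claim. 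A second, more minor, thin spot is the identification of the $C_b$-conjugate $\rho^\ast_{\cal A}(-Q)$ with the divergence $\inf_{P\in{\cal P}}E_P[l^\ast(dQ/dP)]$: the representation \eqref{eq:rep_OCE} is a statement on $L^\infty({\cal P})$, whereas $\rho^\ast_{\cal A}$ tests only against $C_b$; bridging the two is where the paper uses weak compactness of ${\cal P}$ (Ky Fan's minimax theorem) together with a Lusin--Tietze approximation and the Jouini--Schachermayer--Touzi continuity result, and this should not be absorbed into a citation.
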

The proof of this result is given in Subsection \ref{sec:proof_thm:psiROCE}.
\begin{example}
	Let $\ubar{a},\bar{a}$ be two matrices with $0<\ubar{a}\le \bar{a}$, and denote by $A_0$ the set of (deterministic) functions on $[0,T]$ valued in $\mathbb{S}^{>0}_d$ and such that $\ubar{a}\le a_t\le \bar{a}$ for all $t\in [0,T]$. 
	By \cite[Example 4.9]{STZ10} $A_0$ is a generating class of diffusion coefficients and it clearly generates itself (in the sense of Definition \ref{def:separable}).
	Put 
	$$
		{\cal P} := \{ P: \langle S \rangle^P_t/dt \in A_0 \,\, \text{-$P\otimes dt$ a.s.}\}.
	$$
	It follows from \cite[Proposition 6.2]{Bion-Kervarec} that the set
	${\cal P}$ is compact in the weak topology.
	In this setting, taking for instance $l(x) = (x^+)^p/p$ with $p>2$, it follows from Theorem \ref{thm:psiROCE} that $\psi$ satisfies the representation
	\begin{equation*}
 		\psi(X) = \sup_{Q \in {\cal M}^l_{\cal P}(S)}\left(E_Q[X] - \inf_{P \in {\cal P}}\frac{1}{q}E_P\left[\left(\frac{dQ}{dP} \right)^q \right]\right),\quad X \in U_b({\cal P})
 	\end{equation*} 
 	with $q$ the H\"older conjugate of $p$. 
 \end{example}

\section{Proofs}
\label{sec:proofs}

\subsection{Proof of Theorem \ref{thm:main}}
\label{sec:proof_thm:main}
For each $c >0$, consider the set $G^c:=\{M^Z_T:\int_0^tZ_u\,dS_u\ge-c\text{ for all } t\in [0,T] \}$ and the functional
\begin{equation}
\label{eq:defpsi_c}
 	\psi^c(X):= \inf\{m\in\mathbb{R}: m + Y -X \in {\cal A}\text{ for some }  Y \in G^c\}.
\end{equation}
Recall that the inf-convolution $\rho_1\Box \rho_2$ of two functions $\rho_1$ and $\rho_2$ on $L^1({\cal P})$ is defined by
\begin{equation*}
	\rho_1\Box \rho_2(X) := \inf_{Y \in L^1({\cal P})}(\rho_1(X-Y) + \rho_2(Y)).
\end{equation*}
The minimal cost $\psi^c$ can be written as an inf-convolution:
\begin{lemma}
\label{lem:convol}
	For every $X \in L^\infty({\cal P})$, the minimal cost $\psi^c$ satisfies
		$\psi^c(X) = \rho_{\cal A} \Box\bar{\phi}^c (-X)$,
	whereby $\bar{\phi}^c(\cdot):=\phi^c(-\cdot)$ and $\phi^c(X):=\inf\{m\in \mathbb{R}: m+ Y - X\ge 0\text{ for some } Y \in G^c\}$.
\end{lemma}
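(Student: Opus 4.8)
The plan is to unwind both sides to a common expression in terms of the two ``building-block'' functionals $\rho_{\cal A}$ and $\phi^c$, and then verify the two resulting inequalities by elementary manipulations using only the monotonicity of ${\cal A}$. First I would record the preliminary observation that $G^c \subseteq L^1({\cal P})$: every $Y = M^Z_T \in G^c$ is, under each $P \in {\cal P}$, a local martingale bounded below by $-c$, hence a $P$-supermartingale with $E_P[Y] \le 0$; together with $Y \ge -c$ this gives $E_P[Y^+] \le E_P[Y^-] \le c$, so $E_P[|Y|] \le 2c$. Consequently the infimum defining the inf-convolution, which runs over $L^1({\cal P})$, already ``sees'' every admissible trading outcome. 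Next, by the definition of $\Box$ and of $\bar{\phi}^c(\cdot) = \phi^c(-\cdot)$, and performing the bijective substitution $V = -W$ on $L^1({\cal P})$, one has
\[
  \rho_{\cal A}\Box\bar{\phi}^c(-X) = \inf_{W \in L^1({\cal P})}\bigl(\rho_{\cal A}(-X-W) + \phi^c(-W)\bigr) = \inf_{V \in L^1({\cal P})}\bigl(\rho_{\cal A}(-X+V) + \phi^c(V)\bigr),
\]
so it suffices to show the right-hand side equals $\psi^c(X)$, where I use throughout that $\rho_{\cal A}(Z) = \inf\{m : m + Z \in {\cal A}\}$ and $\phi^c(Z) = \inf\{m : m + Y \ge Z \text{ for some } Y \in G^c\}$.

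For the inequality $\inf_V(\,\cdot\,) \le \psi^c(X)$, I would take any admissible pair $(m, Y)$ for $\psi^c(X)$, i.e. $Y \in G^c$ and $m + Y - X \in {\cal A}$, and choose $V := Y \in L^1({\cal P})$. Then $\rho_{\cal A}(-X + Y) \le m$ by definition, while $\phi^c(Y) \le 0$ because $0 + Y \ge Y$ with $Y \in G^c$; summing and taking the infimum over such $(m,Y)$ gives the bound. For the converse $\psi^c(X) \le \inf_V(\,\cdot\,)$, fix $V \in L^1({\cal P})$ with $\rho_{\cal A}(-X+V) + \phi^c(V) < \infty$ (otherwise there is nothing to prove) and $\varepsilon > 0$, and pick $m_1 < \rho_{\cal A}(-X+V) + \varepsilon$ with $m_1 - X + V \in {\cal A}$ and $m_2 < \phi^c(V) + \varepsilon$ with some $Y \in G^c$ satisfying $m_2 + Y \ge V$. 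Since $X \in L^\infty({\cal P})$ and $V, Y \in L^1({\cal P})$, all relevant random variables lie in $L^1({\cal P})$, and from $m_1 + m_2 - X + Y \ge m_1 - X + V \in {\cal A}$ the monotonicity of ${\cal A}$ yields $m_1 + m_2 - X + Y \in {\cal A}$; hence $\psi^c(X) \le m_1 + m_2 < \rho_{\cal A}(-X+V) + \phi^c(V) + 2\varepsilon$. Letting $\varepsilon \downarrow 0$ and taking the infimum over $V$ finishes the argument.

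I do not expect a genuine obstacle: the content is the bookkeeping of the signs relating $\psi^c$, $\phi^c$, $\bar{\phi}^c$ and $\rho_{\cal A}$, the verification $G^c \subseteq L^1({\cal P})$ so that the inf-convolution is taken over a space large enough to capture $G^c$ (and so that the monotonicity of ${\cal A}$, which is only postulated within $L^1({\cal P})$, actually applies), and the routine handling of infinite values via the convention $\inf\emptyset = +\infty$. The monotonicity of ${\cal A}$ is the only structural property of ${\cal A}$ that enters, and it is used exactly once, in the converse inequality.
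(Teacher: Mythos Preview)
Your proposal is correct and follows essentially the same route as the paper. The paper first establishes the intermediate identity $\psi^c(X)=\inf_{Y\in G^c}\rho_{\cal A}(Y-X)$ and then matches it with the inf-convolution, whereas you compare $\psi^c$ and $\rho_{\cal A}\Box\bar\phi^c(-\cdot)$ directly; both arguments hinge on the same two ingredients---$G^c\subseteq L^1({\cal P})$ (which you make explicit) and the monotonicity of ${\cal A}$---and the $\varepsilon$-bookkeeping is identical in spirit.
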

\begin{proof}
	Let $X \in L^\infty({\cal P})$, $\varepsilon >0$ and $Y \in G^c$.
	There is $m \in \mathbb{R}$ such that $\rho_{\cal A}(Y-X) \ge m - \varepsilon$ and $m +Y-X \in {\cal A}$.
	Hence, $\psi^c(X)\le m\le  \rho_{\cal A}(Y-X) + \varepsilon$.
	This implies that $\psi^c(X) \le \inf_{Y \in G^c}\rho_{\cal A}(Y-X)$.
	On the other hand, if $\inf_{Y \in G^c}\rho_{\cal A}(Y-X) = -\infty$, the previous inequality is an equality.
	If $\inf_{Y \in G^c}\rho_{\cal A}(Y-X) > -\infty$, let $m \in \mathbb{R}$ be such that $m < \inf_{Y \in G^c}\rho_{\cal A}(Y-X)$. 
	Then it holds $m \le \psi^c(X)$ because if not, there would exist $Y \in G^c$ such that $m + Y-X \in {\cal A}$. 
	That is, $m \ge \rho_{\cal A}(Y-X)$.
    Therefore, 
    \begin{equation}
    \label{eq:conv_GA}
		\psi^c(X) = \inf_{Y \in G^c}\rho_{\cal A}(Y-X).
	\end{equation}
	In particular, denoting by $\bar{\phi}^c$ the functional $\bar{\phi}^c(X): = \inf\{m: m+X \in L^0_+({\cal P}) - G^c \}$, we have
    \begin{equation*}
	   	\psi^c(X) \ge \inf_{Y \in L^1({\cal P}) }(\rho_{\cal A}(-Y - X) + \bar{\phi^c}(Y))
	\end{equation*}
	and if we take $m > \inf_{Y \in L^1({\cal P}) }(\rho_{\cal A}(-Y - X) + \bar{\phi}^c(Y)) $, then for every $\varepsilon>0$ there exists $Y' \in L^1({\cal P})$ such that $m > \rho_{\cal A}(-Y'-X)+ \bar{\phi}^c(Y')-\varepsilon $.
	Thus, using definition of $\bar{\phi}^c$, one can find $Y \in G^c$ such that $\bar{\phi}^c(Y') + Y\ge -Y'-\varepsilon$.
	Since $\rho_{\cal A}$ is decreasing and translation invariant, this yields $m > \rho_{\cal A}(Y-X) - 2\varepsilon$, thus $m \ge \inf_{Y \in G^c}\rho_{\cal A}(Y-X)$ so that
	\begin{equation*}
	    \psi^c(X) = \inf_{Y \in L^1({\cal P}) }(\rho_{\cal A}(-Y - X+ \bar{\phi}^c(Y)) = \rho_{\cal A}\Box \bar{\phi}^c (-X).
	\end{equation*}
\end{proof}
\begin{proposition}
\label{pro:rep-c}
	Under the conditions of Theorem \ref{thm:main}, it holds:
	\begin{itemize}
		\item[(i)] For every claim $X\in L^0({\cal P})$ with $X^-\in L^1({\cal P})$ and $\psi^c(X) < \infty$, there exists an optimal $\bar{Y} \in G^c$ such that $\psi^c(X) + \bar{Y} - X \in {\cal A}$.
		\item[(ii)] The functional $\psi^c$ is real-valued on $L^\infty({\cal P})$ and satisfies
	\begin{align}
	\label{eq:rep_psi_c}
	 	&\psi^c(X) = \sup_{Q \in {\cal M}^{\cal A}(S)}\left(E_Q[X] - \psi^{c,\ast}(Q) \right), \quad X \in C_b({\cal P}) \\
	\label{eq:rep_psi_c_Ub}
	 	&\psi^c(X) \le \sup_{Q \in {\cal M}^{\cal A}(S)}\left(E_Q[X] - \psi^{c,\ast}(Q) \right), \quad X \in U_b({\cal P})
	\end{align}
	with ${\cal M}^{\cal A}(S)$ the set of probability measures $Q \in ca^+_1(\tilde \Omega)$ such that $\rho^*_{\cal A}(-Q)<\infty$; and it holds
	$$\psi^{c,\ast}(Q) = \phi^{c,\ast}(Q) + \rho^\ast_{{\cal A}}(-Q),$$for all $Q \in ca^+_1(\tilde{\Omega})$.
	\end{itemize}	
\end{proposition}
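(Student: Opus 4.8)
The plan is to deduce the representation from the abstract robust representation theorem of \citet{Robdual} applied to $\psi^c$, which is convex, increasing and cash-additive on $L^\infty({\cal P})$ (the last two because ${\cal A}$ is monotone and $G^c$ convex). That theorem has essentially two inputs: (a) $\psi^c$ is proper, and (b) its sublevel sets $\{X:\psi^c(X)\le\lambda\}$ are closed under ${\cal P}$-quasi-sure convergence of sequences bounded in $L^1({\cal P})$. Properness and real-valuedness on $L^\infty({\cal P})$ are cheap: $\psi^c(0)\le 0$ by taking $Y=0$, $\psi^c(0)\ge\psi(0)>-\infty$ because $G^c\subseteq G$, and cash-additivity gives $|\psi^c(X)-\psi^c(0)|\le\|X\|_\infty$. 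For the conjugate, Lemma~\ref{lem:convol} writes $\psi^c=\rho_{\cal A}\,\Box\,\bar\phi^c(-\,\cdot\,)$, so since the conjugate of an inf-convolution is the sum of the conjugates (after checking, via continuity from below, that the relevant conjugates computed over $C_b$ agree with those over $L^\infty({\cal P})$), one gets $\psi^{c,\ast}(Q)=\phi^{c,\ast}(Q)+\rho^\ast_{\cal A}(-Q)$ for all $Q\in ca^+_1(\tilde\Omega)$; in particular $\{\psi^{c,\ast}<\infty\}\subseteq{\cal M}^{\cal A}(S)$, and enlarging the supremum from $\{\psi^{c,\ast}<\infty\}$ to ${\cal M}^{\cal A}(S)$ only adds terms equal to $-\infty$. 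The weak compactness of $\{\psi^{c,\ast}\le\lambda\}$ follows from that of $\{\rho^\ast_{\cal A}(-\,\cdot\,)\le\lambda\}$, since $\phi^{c,\ast}\ge 0$ and $\psi^{c,\ast}$ is weakly lower semicontinuous.

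First I would prove (i). By \eqref{eq:conv_GA} one has $\psi^c(X)=\inf_{Y\in G^c}\rho_{\cal A}(Y-X)$; fix $X$ with $X^-\in L^1({\cal P})$ and $\psi^c(X)<\infty$ and pick $Y^n\in G^c$ with $\rho_{\cal A}(Y^n-X)\to\psi^c(X)$. Each $Y^n=M^{Z^n}_T$ is the terminal value of a ${\cal P}$-supermartingale bounded below by $-c$, hence $\|Y^n\|_{L^1(P)}\le 2c$ for every $P\in{\cal P}$, so $(Y^n)$ is bounded in $L^1({\cal P})$. A ${\cal P}$-quasi-sure Komlós/Delbaen--Schachermayer-type argument — this is exactly where the aggregation of stochastic integrals of \citet{STZ10} is needed, since the $Y^n$ live only quasi-surely and must be handled under all $P\in{\cal P}$ at once — produces convex combinations $\bar Y^n\in\mathrm{conv}(Y^n,Y^{n+1},\dots)\subseteq G^c$ converging ${\cal P}$-q.s.\ to some $\bar Y$, with $\bar Y\in G^c$ by the quasi-sure closedness of $G^c$. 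Convexity of $\rho_{\cal A}$ gives $\rho_{\cal A}(\bar Y^n-X)\to\psi^c(X)$; the Fatou property of ${\cal A}$ makes $\rho_{\cal A}$ sequentially lower semicontinuous along $L^1({\cal P})$-bounded ${\cal P}$-q.s.-convergent sequences, so $\rho_{\cal A}(\bar Y-X)\le\psi^c(X)$, hence $=\psi^c(X)$; and applying the Fatou property once more (together with ${\cal A}^-$ being ${\cal P}$-uniformly integrable, which controls the integrability bookkeeping) to $\psi^c(X)+\frac{1}{k}+\bar Y-X\in{\cal A}$ gives $\psi^c(X)+\bar Y-X\in{\cal A}$.

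For (ii), the same machinery yields the sublevel-set closedness: if $\psi^c(X^n)\le\lambda$ with $X^n\to X$ ${\cal P}$-q.s.\ and $(X^n)$ bounded in $L^1({\cal P})$, take optimal $\bar Y^n\in G^c$ from (i) so that $\lambda+\bar Y^n-X^n\in{\cal A}$ (by monotonicity of ${\cal A}$), extract ${\cal P}$-q.s.-convergent convex combinations $\bar Y^{(k)}\to\bar Y\in G^c$ with the matching combinations $X^{(k)}\to X$, and use the Fatou property of ${\cal A}$ to get $\lambda+\bar Y-X\in{\cal A}$, i.e.\ $\psi^c(X)\le\lambda$. With (a) and (b) in hand, \citet{Robdual} delivers $\psi^c(X)=\sup_{Q\in ca^+_1(\tilde\Omega)}(E_Q[X]-\psi^{c,\ast}(Q))$ for $X\in C_b({\cal P})$, which is \eqref{eq:rep_psi_c} after the reductions above. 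For \eqref{eq:rep_psi_c_Ub}, approximate $X\in U_b({\cal P})$ from above by $X^k\in C_b({\cal P})$ with $X^k\downarrow X$ (e.g.\ sup-convolutions of an upper semicontinuous version); then $\psi^c(X)\le\psi^c(X^k)=\sup_Q(E_Q[X^k]-\psi^{c,\ast}(Q))$, the supremum being attained at some $Q_k$ lying in a fixed weakly compact sublevel set of $\psi^{c,\ast}$. Passing to a weak limit $Q_k\to Q_\ast$, using $E_{Q_k}[X^k]\le E_{Q_k}[X^j]$ for $k\ge j$ together with the weak continuity of $E_\cdot[X^j]$ and the weak lower semicontinuity of $\psi^{c,\ast}$, and then letting $j\to\infty$ with dominated convergence, one gets $\lim_k\psi^c(X^k)\le E_{Q_\ast}[X]-\psi^{c,\ast}(Q_\ast)$, whence $\psi^c(X)\le\sup_{Q\in{\cal M}^{\cal A}(S)}(E_Q[X]-\psi^{c,\ast}(Q))$.

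The hard part will be the quasi-sure closedness of $G^c$ — equivalently, the Fatou-closedness of the set of $c$-admissible terminal wealths under $L^1({\cal P})$-bounded sequences — because the Komlós extraction and the stability of the admissible-wealth condition have to be carried out simultaneously under the uncountable family ${\cal P}$; this is precisely what the aggregation theorems of \citet{STZ10} are invoked for. Everything else is a careful but routine transfer of the Fatou property from ${\cal A}$ to $\psi^c$ through the inf-convolution identity, plus bookkeeping with the abstract representation theorem of \citet{Robdual}.
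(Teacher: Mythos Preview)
Your overall architecture is right, but the key technical step in (i) differs from the paper's and, as you state it, has a genuine gap. You propose a $\mathcal{P}$-quasi-sure Komlós/Delbaen--Schachermayer extraction of convex combinations of the terminal values $Y^n=M^{Z^n}_T$ converging ${\cal P}$-q.s., followed by ``quasi-sure closedness of $G^c$''. A Komlós-type theorem that produces a \emph{single} sequence of convex combinations converging $P$-a.s.\ simultaneously for all $P$ in an uncountable, mutually singular family is not standard and is not provided by \citet{STZ10}; the aggregation results there do not give you this. The paper sidesteps Komlós entirely: it takes the \emph{pathwise} $\liminf$ of the whole process, $Y_t:=\liminf_n M^{Z^n}_t$, which is automatically well-defined ${\cal P}$-q.s.\ and yields a $P$-supermartingale for every $P\in{\cal P}$. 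After a c\`adl\`ag modification, the aggregated Doob--Meyer/optional decomposition of \citet{STZ10} produces an $S$-integrand $\bar Z$ with $\bar Y\le \int\bar Z\,dS$ and $\int_0^t\bar Z\,dS\ge -c$, hence $M^{\bar Z}_T\in G^c$; monotonicity and the Fatou property of ${\cal A}$ (with $L^1({\cal P})$-boundedness of $A^n:=m^n+Y^n-X$ coming from ${\cal A}^-$ ${\cal P}$-u.i.\ and the supermartingale bound) then give $\psi^c(X)+M^{\bar Z}_T-X\in{\cal A}$. So the ``hard part'' you isolate is handled not by a Komlós step but by the $\liminf$-supermartingale construction; if you keep your route, you would still need essentially this argument to prove the closedness of $G^c$, making the Komlós step superfluous.

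For (ii), two smaller divergences. First, the input to \citet{Robdual} used in the paper is \emph{continuity from below} ($X^n\uparrow X\Rightarrow\psi^c(X^n)\to\psi^c(X)$), established by reusing the $\liminf$-supermartingale construction on optimal hedges $\bar Y^n$ obtained from (i); you instead verify sublevel-set closedness along general ${\cal P}$-q.s.\ convergent, $L^1$-bounded sequences. Second, the $U_b$ inequality \eqref{eq:rep_psi_c_Ub} comes for free from the same theorem in \citet{Robdual}; your separate $X^k\downarrow X$ compactness argument is correct but unnecessary. Note also that the paper explicitly checks, via Kolmogorov/H\"older estimates for bounded diffusion coefficients, that $\Omega$ is ${\cal P}$-q.s.\ $\sigma$-compact --- this is a hypothesis of the representation theorem in \citet{Robdual} that your sketch does not mention.
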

\begin{proof}
	(i) \emph{Existence:}	Let $X \in L^0({\cal P})$ with $X^-\in L^1({\cal P})$ and $\psi^c(X)< \infty$ be fixed.
	Let $(m^n)$ in $\mathbb{R}$ be a minimizing sequence satisfying $m^n \downarrow \psi^c(X)$ and for all $n\in \mathbb{N}$ there exists $Y^n \in G^c$ such that
	\begin{equation*}
		m^n + Y^n - X \in {\cal A}, \quad \text{with}\quad Y^n := \int_0^TZ^n_u\,dS_u.
	\end{equation*}
	 Let $M^{Z^n}$ be the unique process such that $M_t^{Z^n}=\int_0^tZ^n_u\,dS_u$ ${\cal P}$-q.s.
	It can be checked that $M^{Z^n}$ is a $P$-supermartingale for each $n \in \mathbb{N}$ and $P\in {\cal P}$.
	There exists a sequence $(A^n)$ in $ {\cal A}$ such that for every $n$, it holds $m^n + Y^n - X = A^n$.
	Since ${\cal A}^-$ is ${\cal P}$-uniformly integrable, $(A^n)^-$ is bounded in $L^1({\cal P})$ and $E_P[(A^n)^+] = E_P[A^n] + E_P[(A^n)^-]\le E_P[m^n + Y^n + X^-] + E_P[(A^n)^-]\le m^n + E_P[X^-]+ E_P[(A^n)^-]$.
	This shows that $(A^n)$ is bounded in $L^1({\cal P})$.
	Let $t \in [0,T]$ and put 
	$$Y_t:= \liminf_{n \to \infty}M^{Z^n}_t.$$
	The process $Y$ is $\mathbb{F}$-progressively measurable  and does not depend on a particular measure $P \in {\cal P}$.
	Since $\int_0^tZ^n_s\,dS_s\ge -c$, it holds $Y_t\ge -c$.
	On the other hand, it follows from Fatou's lemma and the $P$-supermartingale property of $\int Z^n\,dS$ that
	\begin{multline*}\textstyle
		E_P[Y^+_t] \le \liminf_{n\to \infty}E_P\left[\left(\int_0^tZ^n_u\,dS_u\right)^+\right]\\
				   \textstyle = \liminf_{n\to \infty} \left\{ E_P\left[\int_0^tZ^n_u\,dS_u\right] + E_P\left[\left(\int_0^tZ^n_u\,dS_u\right)^-\right]\right\}\le c.
	\end{multline*}
	That is, $Y_t \in L^1(P)$ and the process $Y$ is a $P$-supermartingale since for all $0\le s\le t\le T$ we have $Y_s = \liminf_{n \to \infty}M_s^{Z^n} \ge \liminf_{n \to \infty}E_P[M_t^{Z^n}\mid {\cal F}_s]\ge E_P[Y_t\mid {\cal F}_s]$.
	Let 
	$$\bar{Y}_t:=\limsup_{s\downarrow t,s \in \mathbb{Q}\cap [0,T]}Y_s \quad \text{for $t\in [0,T)$ and } \bar{Y}_T:=Y_T.$$
	Since the filtration $({\cal F}^P_t)_{t \in [0,T]}$ is right continuous for each $P \in {\cal P}$, the process $\bar{Y}$ is a c\`adl\`ag $P$-supermartingale with respect to $({\cal F}^P_t)_{t \in [0,T]}$, see \cite[Theorems VI.2 and VI.3]{Dellacherie1982}.
	Hence, $\bar{Y}$ is a $P$-supermartingale with respect to the filtration $\mathbb{F}$ for all $P\in {\cal P}$.
	Due	to \cite[Theorem 6.5 and Proposition 6.6]{STZ10}  there exist a $\mathbb{F}$-progressively measurable process $\bar{Z}$ and an increasing progressively measurable process $\bar{L}$ such that $\bar{L}_0 = 0$ and $\bar{Y}_t = \bar{Y}_0 + \int_0^t\bar{Z}_u\,dS_u - \bar{L}_t$, where $\int \bar{Z}\,dS$ is a ${\cal P}$-local martingale.
	Thus, $\int_0^t \bar{Z}_u\,dS_u\ge \bar{Y}_t-\bar{Y}_0\ge -c - \bar{Y}_0$ and by \cite[Theorem VI.2]{Dellacherie1982} and the right-continuity of our filtration, it holds $\bar{Y}_0 \le 0$ so that $M^{\bar{Z}}_T \in G^c$.
	Since $m^n + Y^n - X\in {\cal A}$ for all $n\in \mathbb{N}$ and ${\cal A}$ is monotone one has $\psi^c(X) + \bar{Y}_T - X \in {\cal A}$ and by
	$\psi^c(X) + \int_0^T\bar{Z}_u\,dS_u - X \ge \psi^c(X) +\bar{Y}_T - X$, it holds $\psi^c(X) + \int_0^T\bar{Z}_u\,dS_u - X \in {\cal A}$.

	(ii) \emph{Representation:} First notice that there are compact subsets $(K_n)$ of $\Omega$ (equipped with the maximum norm $||\cdot||_\infty$) such that $\Omega=\cup_{n\in \mathbb{N}}K_n$ ${\cal P}$-q.s. 
	To see this, let $P\in {\cal P}$ and $a \in A$ such that $P = P_0\circ (Y^a)^{-1}$, with $dY^a_t = a^{1/2}(Y^a_\cdot)dS_t$ $P_0$-a.s.
	Since $a$ is bounded, for every $q>4$ (independent of $P$), it holds $E_{P}\big[\big(\int_0^T|a_s|^{2}\,d\langle S\rangle_s \big)^{q/4} \big]<\infty$.
	Thus, it follows by Burkholder-Davis-Gundy and Cauchy-Schwarz' inequalities that
	\begin{align*}
		E_P\left[|Y^a_t - Y^a_s|^q \right] &= E\left[\left|\int_s^ta^{1/2}(Y^a_\cdot)\,dS_u \right|^q \right]
		\le CE_P\left[\left(\int_s^t|a(Y_\cdot^a)|\,d\langle S\rangle_u \right)^{q/2} \right]\\
		&\le CE_P\left[\left(\int_s^t|a(Y_\cdot^a)|^2\,d\langle S\rangle_u \right)^{q/4} \right]|s-t|^{q/4}\le K|s-t|^{q/4}
	\end{align*}
	for some constants $C,K\ge0$.
	Then, by \cite[Theorem A.1]{liminf}, $Y^a \in \Omega^\alpha$ for every $\alpha \in (0, 1/4-1/q)$, where $\Omega^\alpha$ is the space of functions $\omega \in \Omega$ which are $\alpha$-H\"older continuous.
	In particular, $\alpha$ can be chosen independent of $P$.
	Thus, $\Omega = \Omega^\alpha$ ${\cal P}$-q.s.
	By \cite[Corollary 3.2]{liminf}, $\Omega^\alpha=\cup_{n\in \mathbb{N}}K_n$ for some compact sets $K_n$.

	Since $0 \in {\cal A}$, for every $X \in L^\infty({\cal P})$ one has $\psi^c(X)< \infty$ and by $\psi(0)>-\infty$, it holds $\psi^c(0)\in \mathbb{R}$.
	Thus, the convex increasing function $\psi^c$ is real valued on $L^\infty(\mathcal{P})$.
	Let $(X^n)$ be an increasing sequence of bounded measurable functions such that $X^n\uparrow X$.
	By the first part of the proof, for every $n \in \mathbb{N}$ there exists $\bar{Y}^n\in G^c$ such that $\psi^c(X^n) + \bar{Y}^n - X^n \in {\cal A}$ with $\bar{Y}^n = \int_0^T\bar{Z}^n_u\,dS_u$.
	Putting $Y_t := \liminf_{n \to \infty}\int_0^t\bar{Z}^n_u\,dS_u$, $t \in [0,T]$; $\bar{Y}_t:= \limsup_{s\downarrow t, s\in \mathbb{Q}\cap [0,T]}Y_s$ for $t \in [0,T)$ and $\bar{Y}_T :=Y_T$ we use the procedure of part (i) to construct an $S$-integrable process $\bar{Z}$ such that $\int_0^t\bar{Z}_u\,dS_u\ge-c$ and $\bar{Y} \le \int \bar{Z}\,dS$.

	Let $n\in \mathbb{N}$, since $\psi^c$ is increasing and ${\cal A}$ monotone, there exists $A^n \in {\cal A}$ such that $(\lim_{n\to \infty}\psi^c(X^n))+ \bar{Y}^n - X^n= A^n $.
	Arguing as above, $(A^n)$ is bounded in $L^1({\cal P})$.
	Hence, $\lim_{n \to \infty}\psi^c(X^n) + \int_0^T\bar{Z}_u\,dS_u - X\ge\lim_{n \to \infty}\psi^c(X^n) + \bar{Y}_T - X\ge \liminf_{n \to \infty}A^n\in {\cal A}$, which implies $\lim_{n \to \infty}\psi^c(X^n) \ge \psi^c(X)$ and therefore $\lim_{n\to \infty}\psi^c(X^n) = \psi^c(X)$.
	Thus, by \cite[Theorem 1.7]{Robdual}, (see also \cite[Theorem 4.5.2]{diss} for the probabilistic version of this result) it holds
	\begin{align}
	\label{eq:rep-psitilde}
		&\psi^c(X) = \sup_{Q \in ca^+_1(\tilde\Omega)}(E_Q[X] - \psi^{c,\ast}(Q)),\quad X \in C_b({\cal P})\\
	\label{eq:rep-psitilde_Ub}
		&\psi^c(X) \le \sup_{Q \in ca^+_1(\tilde\Omega)}(E_Q[X] - \psi^{c,\ast}(Q)),\quad X \in U_b({\cal P}).
	\end{align}
	That $\psi^{c,\ast}(Q) = \phi^{c,\ast}(Q) + \rho^\ast_{{\cal A}}(-Q)$, for all $Q \in ca^+_1(\tilde\Omega)$ is a consequence of Lemma \ref{lem:convol}.
	Thus, if $\rho^*_{\cal A}(-Q)=\infty$, then $\psi^{c,*}(Q)=\infty$, so that \eqref{eq:rep_psi_c} and \eqref{eq:rep_psi_c_Ub} can be deduced from \eqref{eq:rep-psitilde} and \eqref{eq:rep-psitilde_Ub} respectively.
\end{proof}

\begin{proof}[of Theorem \ref{thm:main}]
	Since $G^n \subseteq G$ for all $n\in \mathbb{N}\setminus\{0\}$, one has $\psi(X) \le \inf_{n\ge1}\psi^n(X)$.
	Assume that the inequality is strict, that is, $\psi(X) < \inf_{n\ge1}\psi^n(X)$.
		Then, there are $m\in \mathbb{R}$ and $\varepsilon>0$ such that $\psi(X)<m<m+\varepsilon<\inf_{n\ge1}\psi^n(X)$.
		Thus, there is $x \in \mathbb{R}$ such that $\psi(X)\ge x-\varepsilon$, with $x + Y - X \in {\cal A}$ for some $Y \in G$.
		Since there is $n\in \mathbb{N}$ such that $Y \in G^n$, we have $x\ge \psi^n(X)$.
		Hence, $$m> \psi(X)\ge x-\varepsilon \ge \psi^n(X)-\varepsilon\ge \inf_{n\ge1}\psi^n(X)-\varepsilon,$$
		which is a contradiction.
	Thus, $\psi(X) = \inf_{n\ge 1}\psi^n(X)$, the sequence $(\psi^{n,*}(Q))_n$ is increasing and $\psi^*(Q) = \sup_{n\ge 1}\psi^{n,*}(Q)$ for all $Q \in ca^+_1(\tilde \Omega)$.

	Let $X \in C_b({\cal P})$.
	By Proposition \ref{pro:rep-c}, for every $n\ge 1$ it holds $\psi(X)\le \sup_{Q \in {\cal M}^{\cal A}_{{\cal P}}(S)}(E_Q[X] - \psi^{n,\ast}(Q))$.
	Thus, there is $Q^n \in ca^+_1(\tilde \Omega)$ such that
	\begin{equation*}
		\psi(X) \le E_{Q^n}[X] - \psi^{n,*}(Q^n) + \frac{1}{n}.
	\end{equation*}
	Since $X$ is bounded and $\rho^*_{\cal A}(-Q) \le \psi^{n,*}(Q)$, there is a constant $c\ge 0$ such that $Q^n \in \{Q\in ca^+_1(\tilde \Omega):\rho^*_{\cal A}(-Q) \le c\}$ for all $n$.
		Hence, there is $Q \in ca^+_1$ such that up to a subsequence, $(Q^n)$ converges to $Q$ in $\sigma(ca^+_1, C_b)$ and since $\tilde \Omega$ is closed, we have $1=\limsup_{n\to\infty}Q^n(\tilde \Omega)\le Q(\tilde \Omega)$, showing that actually, $Q \in ca^+_1(\tilde \Omega)$.
	Now, let $\varepsilon >0$ and $N \in \mathbb{N}$ be such that $\psi^{N,*}(Q)\ge \psi^*(Q) - \varepsilon$.
	Since $\psi^{N,*}$ is lower semicontinuous (with $ca^+_1$ equipped with the weak topology $\sigma(ca^+_1, C_b)$), we can choose $n\ge N$ large enough so that $\psi^{N,*}(Q^n)\ge \psi^{N,*}(Q) - \varepsilon$.
	Thus,
	\begin{equation*}
		\psi^{n,*}(Q^n) \ge \psi^{N,*}(Q^n) \ge \psi^{N,*}(Q) - \varepsilon \ge \psi^*(Q) - 2\varepsilon.	
	\end{equation*}	
	This shows that $\psi(X) \le E_{Q^n}[X] - \psi^*(Q) - 2\varepsilon + 1/n$.
	Taking the limit in $n$ and dropping $\varepsilon$ yields
	\begin{equation*}
		\psi(X) \le E_Q[X] - \psi^*(Q).
	\end{equation*}
	Since the weak duality $\psi(X)\ge \sup_{Q \in ca^+_1(\tilde \Omega)}(E_Q[X] - \psi^*(Q))$, $X \in C_b({\cal P})$ is easily obtained, this implies
	\begin{equation}
	\label{eq:rep psi Cb proof}
	 	\psi(X)= \sup_{Q \in ca^+_1(\tilde \Omega)}(E_Q[X] - \psi^*(Q))\quad X \in C_b({\cal P}).
	\end{equation}
	Let us now show that $\psi^*(Q)=\infty$ whenever $Q \notin {\cal M}^{\cal A}_{{\cal P}}(S)$.
	Since $0\in G$, we have $\psi(X) \le \rho_{\cal A}(-X)$ for every $X \in C_b$, and hence $\psi^*(Q)\ge \rho^*_{\cal A}(-Q)$ for every $Q \in ca^+_1(\tilde \Omega)$.
	Thus, if $\rho^*_{\cal A}(-Q)=\infty$, then $\psi^{*}(Q)=\infty$.
	If $S$ is not a $Q$-local martingale, then since $\text{supp}(Q)\subseteq \tilde{\Omega}$ and $\tilde \Omega$ is a subset of a $\sigma$-compact set, it follows from \cite[Remark 4.1 and Proposition 4.4]{liminf} that there is $X \in C_b$ and an $S$-integrable process $Z$ such that $X\le M^Z$ and $E_Q[X]>0$.  
	Thus, one has $\psi(xX)\le 0$ for all $x\ge 0$ and by $X\in C_b$ it holds $\psi^{\ast}(Q)\ge E_Q[xX]$ for all $x\ge 0$.
	This shows by scaling that $\psi^{\ast}(Q)=\infty$, which proves \eqref{eq:representation_psi} due to \eqref{eq:rep psi Cb proof}.

	Furthermore, it follows again by Proposition \ref{pro:rep-c} that
	\begin{equation*}
		\psi^n(X) \le \sup_{Q \in {\cal M}^{\cal A}(S)}\left(E_Q[X] - \psi^{n,\ast}(Q) \right), \quad X \in U_b({\cal P}).
	\end{equation*}
	Let $X \in U_b({\cal P})$.
	For every $n\ge1$, it holds $\psi(X) \le \sup_{Q \in {\cal M}^{\cal A}(S)}\left(E_Q[X] - \psi^{n,\ast}(Q) \right)$.
	Arguing as above, we find $Q^n, Q \in ca^+_1(\tilde \Omega)$ such that $(Q^n)$ converges to $Q$ in $\sigma(ca^+_1, C_b)$, and for every $\varepsilon>0$, up to a subsequence, $\psi(X) \le E_{Q^n}[X] - \psi^*(Q) - 2\varepsilon + 1/n$.
	Since $X$ is upper semicontinuous and bounded, taking the limit this implies $\psi(X) \le E_{Q}[X] - \psi^*(Q)$, showing that
	\begin{equation}
		\label{eq:inequal_Ub_proof}
		\psi(X) \le \sup_{Q \in {\cal M}^{\cal A}(S)}\left(E_Q[X] - \psi^{\ast}(Q) \right), \quad X \in U_b({\cal P}).
	\end{equation}
	The assumption $\psi^*(Q)= \sup_{X \in U_b}(E_Q[X]-\psi(X))$ implies that the inequality in \eqref{eq:inequal_Ub_proof} is an equality and as shown above, $\psi^*(Q) = \infty$ whenever $Q \notin {\cal M}^{\cal A}_{\cal P}(S)$.
	This concludes the proof.
\end{proof}

\subsection{Proof of Theorem \ref{thm:psiROCE}}
\label{sec:proof_thm:psiROCE}
Recall that here, ${\cal A}$  is given by 
\begin{equation*}
	{\cal A}:= \left\{X \in L^1({\cal P}): \rho(X)\le 0\right\},
\end{equation*}
the acceptance set of the robust optimized certainty equivalent defined by \eqref{eq:rob.oce}.
\begin{lemma}
\label{lem:A.UI}
	If there exist $a,b\ge0$ and $p>2$ such that the loss function $l$ satisfies the growth condition $l(x)\ge a|x|^p + b$,
	then the set ${\cal A}^-:=\{A^-: A \in {\cal A}\}$ is ${\cal P}$-uniformly integrable.
\end{lemma}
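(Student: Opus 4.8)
The plan is to use the dual description of the robust optimized certainty equivalent \eqref{eq:rob.oce} only through the Fenchel--Young inequality, so as to avoid controlling the minimiser $m$ in \eqref{eq:rob.oce}, which is the delicate point. Fix $P\in{\cal P}$, set $c_0:=l^\ast(0)=-\inf_{x\in\mathbb R}l(x)$ (finite since $l$ is bounded from below, nonnegative since $l(0)=0$), put $q:=p/(p-1)\in(1,2)$ --- this is where $p>2$ enters --- and $C_1:=\frac{p-1}{p}(pa)^{-1/(p-1)}$. Using $xy-l(x)\le c_0$ for $x\le 0$, $y>0$, together with $l(x)\ge a|x|^p+b\ge ax^p+b$ for $x\ge 0$ and $\sup_{x>0}(xy-ax^p)=C_1y^q$, one gets
\[
  l^\ast(y)=\sup_{x\in\mathbb R}\big(xy-l(x)\big)\ \le\ \max\big(c_0,\;C_1y^q-b\big)\ \le\ c_0+C_1y^q\qquad(y>0);
\]
in particular $l^\ast<\infty$ on $(0,\infty)$.

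\emph{Step 1 (a localised change of measure).} Let $A\in{\cal A}$, that is $\rho(A)\le 0$, and let $B$ be a Borel set with $B\subseteq\{A<0\}$ and $P(B)>0$. With $Q:=P(\,\cdot\cap B)/P(B)$, so that $dQ/dP=\mathbf 1_B/P(B)$ takes only the values $0$ and $1/P(B)$, the Fenchel--Young inequality $l(x)\ge x\,(dQ/dP)-l^\ast(dQ/dP)$ applied at $x=m-A$ and integrated against $P$ (all the resulting terms being finite, since $A\in L^1(P)$, $dQ/dP$ is bounded and $l^\ast<\infty$ on $(0,\infty)$) gives, using $E_P[dQ/dP]=1$,
\[
  E_P[l(m-A)]-m\ \ge\ -E_Q[A]-E_P[l^\ast(dQ/dP)]\qquad\text{for every }m\in\mathbb R.
\]
As the left-hand side is $\le\sup_{P'\in{\cal P}}(E_{P'}[l(m-A)]-m)$, taking the infimum over $m$ in \eqref{eq:rob.oce} and then using $\rho(A)\le 0$ yields $E_Q[A]\ge-E_P[l^\ast(dQ/dP)]$. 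Since $A=-A^-$ on $B$ we have $E_Q[A]=-E_P[A^-\mathbf 1_B]/P(B)$, whereas $E_P[l^\ast(dQ/dP)]=P(B)\,l^\ast(1/P(B))+(1-P(B))c_0$. Rearranging, and then inserting the bound on $l^\ast$ (the $P(B)^2$-terms cancel),
\[
  E_P[A^-\mathbf 1_B]\ \le\ P(B)^2\,l^\ast\!\big(1/P(B)\big)+c_0P(B)(1-P(B))\ \le\ c_0P(B)+C_1P(B)^{\,2-q}\ =:\ \eta\big(P(B)\big).
\]
Finally, for an \emph{arbitrary} Borel set $B$, apply the last estimate to $B':=B\cap\{A<0\}$ (with $E_P[A^-\mathbf 1_B]=0$ when $P(B')=0$): since $A^-\mathbf 1_B=A^-\mathbf 1_{B'}$, $P(B')\le P(B)$ and $\eta$ is nondecreasing, we obtain $E_P[A^-\mathbf 1_B]\le\eta(P(B))$ for every $A\in{\cal A}$ and every Borel set $B$.

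\emph{Step 2 (conclusion).} Because $2-q>0$, the function $\eta$ is nondecreasing with $\eta(0)=0$. Taking $B=\Omega$ gives $\sup_{A\in{\cal A}}E_P[A^-]\le\eta(1)=c_0+C_1<\infty$, hence by Markov's inequality $\sup_{A\in{\cal A}}P(A^->K)\le(c_0+C_1)/K$ for $K>0$, and therefore $\sup_{A\in{\cal A}}E_P[A^-\mathbf 1_{\{A^->K\}}]\le\eta\big((c_0+C_1)/K\big)$, which tends to $0$ as $K\to\infty$. Thus ${\cal A}^-$ is uniformly $P$-integrable; as $P\in{\cal P}$ was arbitrary, ${\cal A}^-$ is ${\cal P}$-uniformly integrable.

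The main obstacle is exactly what Step~1 is designed to bypass: for the functional \eqref{eq:rob.oce} there is no a priori bound, uniform over $A\in{\cal A}$, on the minimiser $m=m_A$ (it can be driven towards $+\infty$ when $A$ is only rarely negative), so one cannot simply feed $m_A$ into the growth estimate; testing against the renormalised restriction of $P$ to $\{A<0\}$ eliminates $m$ from the argument. The only other point requiring care is to extract from the growth condition a bound of the form $l^\ast(y)\le c_0+C_1y^{q}$ with exponent $q<2$ --- this is precisely where $p>2$ is used, since for $p\le 2$ one would have $2-q\le 0$ and $\eta$ would not vanish at the origin.
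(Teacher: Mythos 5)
Your proof is correct. Both arguments rest on the same mechanism---testing the weak duality $\rho(A)\ge\rho^P(A)\ge E_Q[-A]-E_P[l^*(dQ/dP)]$ against a suitably chosen $Q\ll P$ and exploiting that the growth condition forces $l^*(y)\le c_0+C_1y^q$ with $q=p/(p-1)<2$---but the implementations differ genuinely. The paper argues by contradiction: assuming a persistent tail mass $\delta^n=E_{\bar P}[X^-1_{\{X^-\ge n\}}]\ge\varepsilon$, it tests against the density $X^-1_{\{X^-\ge n\}}/\delta^n$ and, via a Jensen-type estimate, drives $\rho(X)$ to $+\infty$ because $n^{2-q}\to\infty$. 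You instead test against the normalized restriction $\mathbf{1}_B/P(B)$ for $B\subseteq\{A<0\}$, which yields the explicit uniform modulus $E_P[A^-\mathbf{1}_B]\le c_0P(B)+C_1P(B)^{2-q}$ for all $A\in{\cal A}$ and all Borel $B$; together with the resulting $L^1$-bound and Markov's inequality this gives uniform integrability directly. Your route buys a clean quantitative modulus of uniform integrability and avoids both the Jensen step and the contradiction set-up (whose quantifier structure in the paper---fixing a single $X$ with $\delta^n\ge\varepsilon$ for all $n$---is slightly delicate to extract from the negation of uniform integrability), at the mild cost of computing $E_P[l^*(\mathbf{1}_B/P(B))]$ explicitly. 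One shared caveat: both proofs implicitly require $a>0$ in the hypothesis $l(x)\ge a|x|^p+b$, since otherwise the conjugate bound $l^*(y)\le c_0+C_1y^q$ (respectively the paper's $l^*(z)\le a'|z|^q+b'$) is unavailable and the statement itself fails.
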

\begin{proof}
	Consider the classical OCE $\rho^P$ defined as
	\begin{equation}
	\label{eq:OCE_P}
	 	\rho^P(X):=\inf_{m \in \mathbb{R}}(E_P[l(m-X)]-m).
	\end{equation}
	By definition, we have 
	\begin{align*}
		\rho(X) \ge \sup_{P \in {\cal P}}\inf_{m \in \mathbb{R}}(E_P[l(m -X)] - m) = \sup_{P \in {\cal P}}\rho^P(X).
	\end{align*}
	Assume by contradiction that there is $\bar P \in {\cal P}$ and $\varepsilon>0$ such that it holds 
	$$\liminf_{n\to \infty}\sup_{X \in {\cal A}}E_{\bar P}[X^-1_{\{X^- \ge n\}}]\ge \varepsilon.$$
	Given $X \in {\cal A}$, put $\delta^n:= E_{\bar P}[X^-1_{\{X^-\ge n\}}]\ge \varepsilon$.
	Let $Q^n\ll \bar P$ be the measure given by $\frac{dQ^n}{d\bar P}:=X^-1_{\{X^-\ge n\}}/\delta^n$.

	Since $l^*$ satisfies the growth condition $l^*(z)\le a'|z|^q+b'$ for some $a',b'\ge 0$ and $1< q<2$ the H\"older conjugate of $p$, 
	it  holds
	\begin{align*}
		\rho(X) \ge \rho^{\bar P}(X) &\ge E_{\bar P}\left[-X\frac{X^-1_{\{X^-\ge n\}}}{\delta_n}\right] - E_{\bar P}\left[l^*\left(\frac{X^-1_{\{X^-\ge n\}}}{\delta^n}\right)\right]\\
				&\ge E_{\bar P}\left[(X^-)^21_{\{X^-\ge n\}}\right]\frac{1}{\delta^n}-a'E_{\bar P}\left[(X^-)^q1_{\{X^-\ge n\}}\right]\frac{1}{(\delta^n)^q} - b'\\
				&\ge E_{\bar P}\left[(X^-)^q1_{\{X^-\ge n\}}\right]\left(\frac{n^{2-q}}{\delta^n} - \frac{a'}{(\delta^n)^q}\right) - b'\\
				&\ge \left(E_{\bar P}\left[X^-1_{\{X^-\ge n\}}\right]\right)^q\left(\frac{n^{2-q}}{\delta^n} - \frac{a'}{(\delta^n)^q}\right) - b',	
	\end{align*} 
	where the last inequality follows by Jensen's inequality with $n$ large enough.
	Since $q<2$, the last term converges to infinity, a contradiction.
\end{proof}
Recall the conjugate function $\rho^*$ defined in Section \ref{sec:main results} as
\begin{equation*}
	\rho^*(-Q):= \sup_{X \in C_b}(E_Q[-X] - \rho(X)).
\end{equation*}
\begin{lemma}
\label{lem:A.sublevel}
	If ${\cal P}$ is $\sigma(ca^+_1,C_b)$-compact, then it holds
	\begin{equation}
	\label{eq:rho.start.inf}
		\rho^*(-Q) = \inf_{P \in {\cal P}}E_P\left[l^*\left(\frac{dQ}{dP} \right) \right] \quad \text{for all } Q \in ca^+_1
	\end{equation} 
	and the sublevel sets $\{Q \in ca^+_1: \rho^*(-Q)\le c\}$, $c\ge 0$ are $\sigma(ca^+_1, C_b)$-compact.
\end{lemma}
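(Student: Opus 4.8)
The plan is to first establish the identity \eqref{eq:rho.start.inf}, and then deduce compactness of the sublevel sets from the compactness of ${\cal P}$ together with the lower semicontinuity of the relative-entropy-type functionals $Q \mapsto E_P[l^*(dQ/dP)]$.

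For the identity, I would start from the representation \eqref{eq:rep_OCE} of the robust OCE, which already gives
$\rho(X) = \sup_{Q' \in ca^+_1}(E_{Q'}[-X] - \inf_{P \in {\cal P}}E_P[l^*(dQ'/dP)])$ for $X \in L^\infty({\cal P})$. Taking the conjugate over $X \in C_b$ of both sides and using that $C_b \subseteq L^\infty({\cal P})$, the Fenchel–Moreau biconjugation identity should yield $\rho^*(-Q) = \inf_{P\in {\cal P}} E_P[l^*(dQ/dP)]$, provided one checks that $\rho$ restricted to $C_b$ has the same conjugate as $\rho$ on $L^\infty({\cal P})$ — this uses that $C_b$ is dense enough (bounded continuous functions suffice to recover the penalty since the penalty functional $Q \mapsto \inf_P E_P[l^*(dQ/dP)]$ is $\sigma(ca^+_1,C_b)$-lower semicontinuous, which is where weak compactness of ${\cal P}$ enters: the infimum of the l.s.c. maps $Q \mapsto E_P[l^*(dQ/dP)]$ over the \emph{compact} set ${\cal P}$ is again l.s.c.). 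Concretely I would argue: $\rho^*(-Q) \le \inf_P E_P[l^*(dQ/dP)]$ is immediate from \eqref{eq:rep_OCE} (it exhibits the right-hand side as an upper bound for each $E_Q[-X]-\rho(X)$); for the reverse, one invokes the biconjugate theorem, noting $\rho$ is convex, $\sigma(L^\infty,L^1)$-type lower semicontinuous (Fatou via Lemma \ref{lem:A.UI}-style arguments), so its penalty function is recovered by the supremum over $C_b$.

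For the compactness of $\{Q \in ca^+_1 : \rho^*(-Q) \le c\}$: by \eqref{eq:rho.start.inf} this set equals $\{Q : \inf_{P \in {\cal P}} E_P[l^*(dQ/dP)] \le c\}$. I would first show this set is $\sigma(ca^+_1,C_b)$-closed, which follows once $Q \mapsto \inf_{P\in{\cal P}} E_P[l^*(dQ/dP)]$ is shown to be l.s.c.; here one uses that for fixed $P$ the map $Q \mapsto E_P[l^*(dQ/dP)]$ is convex and l.s.c. in $Q$ (a classical $f$-divergence fact), and that taking an infimum over the weakly compact index set ${\cal P}$ preserves lower semicontinuity — this is the standard fact that if $(q,P)\mapsto f(q,P)$ is l.s.c. and ${\cal P}$ is compact then $q \mapsto \inf_P f(q,P)$ is l.s.c. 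Then I would establish relative compactness (tightness) of the sublevel set: the superlinear growth $l^*(z) \ge$ (something)$\cdot z \log z$-type, or more simply the fact that $l^*$ is superlinear at infinity because $l$ is finite-valued, forces uniform integrability of the densities $dQ/dP$ against some fixed $P \in {\cal P}$, hence forces tightness of the family $\{Q : \rho^*(-Q) \le c\}$ (each such $Q$ is dominated in a uniformly integrable way by measures in the tight — indeed compact — family ${\cal P}$). Prokhorov's theorem then upgrades tightness plus closedness to compactness.

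The main obstacle I anticipate is the interchange of the infimum over ${\cal P}$ with the limit operations needed for lower semicontinuity and for tightness — i.e. making rigorous that $\inf_{P \in {\cal P}} E_P[l^*(dQ/dP)]$ inherits good properties from the compact family ${\cal P}$ and the superlinear function $l^*$, rather than any single estimate. In particular, care is needed because along a weakly convergent sequence $Q^n \to Q$ the near-optimal measures $P^n \in {\cal P}$ realizing the infimum also move, so one must pass to a subsequence with $P^n \to \bar P \in {\cal P}$ and then apply joint lower semicontinuity of $(Q,P) \mapsto E_P[l^*(dQ/dP)]$; establishing this joint lower semicontinuity (e.g. via the variational formula $E_P[l^*(dQ/dP)] = \sup_{X \in C_b}(E_Q[X] - E_P[l(X)])$ and using weak continuity of $E_P[l(X)]$ in $P$ for bounded continuous $l(X)$, after a truncation of $l$) is the technical heart of the argument.
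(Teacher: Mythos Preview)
Your plan differs from the paper's in the proof of the identity \eqref{eq:rho.start.inf}. The paper does not go through biconjugation of the penalty. Instead it uses that $\rho=\sup_{P\in{\cal P}}\rho^P$ (with $\rho^P$ the single-$P$ OCE of \eqref{eq:OCE_P}) and applies Fan's minimax theorem --- using that $P\mapsto\rho^P(X)$ is concave and weakly upper semicontinuous and ${\cal P}$ is compact --- to obtain $\rho^*(-Q)=\inf_{P\in{\cal P}}\sup_{X\in C_b}(E_Q[-X]-\rho^P(X))$. The inner supremum is then identified with $E_P[l^*(dQ/dP)]$ by approximating $L^\infty(P)$ functions by continuous ones via Lusin/Tietze and invoking the Jouini--Schachermayer--Touzi theorem to pass to the limit in $\rho^P$.

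Your biconjugation route has a gap. For Fenchel--Moreau to give $\alpha=\alpha^{**}$ with $\alpha(Q)=\inf_{P\in{\cal P}}E_P[l^*(dQ/dP)]$, you need $\alpha$ to be \emph{convex} as well as weakly lower semicontinuous. You carefully address the latter (and your passing remark that ``the infimum of l.s.c.\ maps over a compact index set is l.s.c.'' is wrong as stated --- you rightly correct it to joint lower semicontinuity later), but convexity is never addressed. Since ${\cal P}$ is only assumed compact, not convex, an infimum over ${\cal P}$ of functions convex in $Q$ need not be convex in $Q$, so the biconjugate could be strictly smaller than $\alpha$. One can close this by first arguing that $\alpha$ equals the $L^\infty({\cal P})$-conjugate of $\rho$ (and is therefore convex as a conjugate), but you would need to extract more from the cited representation \eqref{eq:rep_OCE} than just the one inequality. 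For the sublevel-set compactness the paper's tightness argument is essentially yours, made explicit through Young's inequality $m\,\tfrac{dQ}{dP}1_{K_n^c}\le l(m1_{K_n^c})+l^*(\tfrac{dQ}{dP})$; closedness, however, is obtained for free: $\rho^*(-\cdot)$ is by definition a supremum of $\sigma(ca^+_1,C_b)$-continuous affine functions, hence automatically l.s.c., so there is no need to establish lower semicontinuity of $\alpha$ separately for this part.
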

\begin{proof}
	Let us first prove \eqref{eq:rho.start.inf}. 	
	Since for each $X \in C_b$ the function $P\mapsto \rho^P(X)$ is concave and $\sigma(ca^+_1, C_b)$-upper semicontinuous, it follows by weak compactness of ${\cal P}$ and \citet[Theorem 2]{fan53} that 
	\begin{equation}
	\label{eq:infsup1}
	 	\rho^\ast(-Q) = \inf_{P \in {\cal P}}\sup_{X \in C_b}(E_Q[-X]-\rho^P(X)).
	\end{equation}
	Let $L^\infty(P, {\cal F}^S_T)$ be the space of $P$-essentially bounded and ${\cal F}^S_T$-measurable random variables.
	We claim that 
	$$\sup_{X \in C_b}(E_Q[-X]-\rho^P(X)) = \sup_{X \in L^\infty(P,{\cal F}^S_T)}(E_Q[-X]-\rho^P(X)) = E_P[l^*(dQ/dP)]$$ for every Borel measure $Q \ll P$.
	The second equality of the claim follows by \cite{Ben-Teb}.
	To prove the first one, let $\varepsilon>0$ and $X \in L^\infty(P,{\cal F}^S_T)$ be such that $$\sup_{X \in L^\infty(P,{\cal F}^S_T)}(E_Q[-X]-\rho^P(X))\le E_Q[-X]-\rho^P(X) + \varepsilon.$$
	It follows by Lusin's and Tietze's theorems that there is a sequence $(X^n)$ of continuous functions converging $P$-a.s. to $X$, see for instance \cite[Theorem 1]{Wis94} for details.
	In addition, the sequence $(X^n)$ can be chosen bounded. 
	Since $l^*(x)/x\to +\infty$ as $|x|$ goes to infinity, it follows that for each $c\ge0$ the set $\{dQ/dP: E_P[l^*(dQ/dP)]\le c\}$ is $\sigma(L^1(P,{\cal F}^S_T),L^\infty(P,{\cal F}^S_T))$-compact.
	Hence, by the representation
	\begin{equation*}
	 \label{eq:rep_rho-P}
	 	\rho^P(X) = \sup_{Q\ll P}(E_Q[-X]-E_P[l^*(dQ/dP)])\quad X \in L^\infty(P, {\cal F}^S_T),
	\end{equation*}
	see e.g. \cite[Theorem 4.2]{Ben-Teb}, and the Jouini-Schachermayer-Touzi theorem, (see \cite[Theorem 2.4]{jou-sch-tou}) one has $\lim_{n\to \infty}\rho^P(X^n)=\rho^P(X)$.
	Therefore, $E_Q[-X]-\rho^P(X) = \lim_{n\to \infty}E_Q[-X^n] - \rho^P(X^n)$, which proves the claim.
	In combination with \eqref{eq:infsup1}, we obtain \eqref{eq:rho.start.inf}.

	Next, let us prove compactness of the sublevel sets.
	As a consequence of Prokhorov's theorem the set ${\cal P}$ is tight.
    That is, there exists a family $(K_n)$ of compact subsets of $\Omega$ 
    such that $\sup_{P\in {\cal P}}P(K_{n}^c)\to 0$.
    Let $Q\in ca^1_+$ satisfy $\rho^*(-Q)\le C$.
    There is $P \in {\cal P}$ such that $Q\ll P$ and $E_P[l^*(\frac{dQ}{dP})]\le C+1$.
    Therefore for all $m>0$, by Young's inequality one has
    \begin{align*}
       m \frac{dQ}{dP}1_{K_n^c} \le l(m1_{K_n^c}) + l^*(\frac{dQ}{dP})
    \end{align*}
    and using $l(0) = 0$ one gets
    \begin{align*}
        Q(K_n^c)
                  &\le\frac{1}{m}\left( l(m)P(K^c_n)+ E_P\left[l^*(\frac{dQ}{dP})\right]\right)
                 \le \frac{l(m)}{m}\sup_{P\in {\cal P}}P(K_n^c) + \frac{(C+1)}{m}.
    \end{align*}
    Thus, $\sup_{\{Q:\rho^*(-Q)\le C\}}Q(K^c_n)\to 0$ as $n\to \infty$ showing by Prokhorov's theorem that the sublevel set is weakly relatively compact.
    Since $\rho^*$ is lower semicontinuous, the sublevel sets are $\sigma(ca_1^+,C_b)$-closed.
    This completes the argument.
\end{proof}
\begin{lemma}
\label{lem:Q-Y}
	For every local martingale measure $Q$ of $S$ such that $Q\ll P$ for some $P \in {\cal P}$, and every $Y \in G$, it holds $E_Q[Y]\le 0$.
\end{lemma}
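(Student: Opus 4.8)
\emph{Proof proposal.} The plan is to exploit the structure of the gains set $G$: any $Y\in G$ is of the form $Y=M^Z_T$, where $M^Z$ is the ${\cal P}$-q.s.\ aggregated It\^o integral $\int Z\,dS$ and $\int_0^tZ_u\,dS_u\ge -c$ for all $t\in[0,T]$, for some $c>0$. The claim then reduces to showing that $M^Z$ is, under $Q$, a local martingale bounded from below. Indeed, once this is known, $N:=M^Z+c$ is a nonnegative $Q$-local martingale, hence a $Q$-supermartingale; more precisely, picking a $Q$-localizing sequence $(\tau_n)$ for $M^Z$, each $N^{\tau_n}$ is a true $Q$-martingale with $E_Q[N_{T\wedge\tau_n}]=N_0=c$, and by continuity $N_{T\wedge\tau_n}\to N_T$ $Q$-a.s., so Fatou's lemma (applied to the nonnegative random variables $N_{T\wedge\tau_n}$) gives $E_Q[M^Z_T+c]\le c$, i.e.\ $E_Q[Y]=E_Q[M^Z_T]\le M^Z_0=0$.

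So the work is to verify that $M^Z$ is a $Q$-local martingale bounded below $Q$-a.s. First, since $Q$ is a local martingale measure of $S$, Karandikar's theorem (as recalled above) yields $\langle S\rangle_t=\langle S\rangle^Q_t$ $Q$-a.s.\ for all $t$; because $Q\ll P$ and $\int_0^T\abs{Z_t}^2\,d\langle S\rangle_t<\infty$ holds $P$-a.s.\ (the process $Z$ being $S$-integrable, as $Y\in G\subseteq L^0({\cal P})$ requires), the same integrability holds $Q$-a.s., so $\int Z\,dS$ is well defined under $Q$. Next, by the classical stability of the It\^o integral under a locally absolutely continuous change of measure, the $Q$-version of $\int Z\,dS$ agrees $Q$-a.s.\ with its $P$-version, which in turn agrees $Q$-a.s.\ with $M^Z$ (since $M^Z=\int Z\,dS$ $P$-a.s.\ and $Q\ll P$). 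As $S$ is a $Q$-local martingale and $Z$ is $S$-integrable under $Q$, the $Q$-integral $\int Z\,dS$ is a $Q$-local martingale, and therefore so is $M^Z$. Finally, the bound $\int_0^tZ_u\,dS_u\ge -c$ for all $t$ holds ${\cal P}$-q.s., in particular $P$-a.s.\ and hence $Q$-a.s., so $M^Z_t\ge -c$ for all $t$, $Q$-a.s.

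I expect the only genuine subtlety to be the passage from the ${\cal P}$-q.s.\ aggregated object $M^Z$ to a statement about the single measure $Q$: since $Q$ need not lie in ${\cal P}$, one cannot invoke the aggregation theory of \cite{STZ10} directly for $Q$, and must instead work with a fixed $P\in{\cal P}$ dominating $Q$ and rely on the measure-change invariance of stochastic integration. The remaining ingredients---the identification of the quadratic variation, the supermartingale property of bounded-below local martingales, and the Fatou argument---are routine.
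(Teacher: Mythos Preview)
Your argument is correct, but it follows a genuinely different line from the paper's. You identify $M^Z$ with the $Q$-stochastic integral $\int Z\,dS$ by invoking the invariance of the It\^o integral under an absolutely continuous change of measure; once that identification is made, the bounded-below local martingale is a $Q$-supermartingale and Fatou finishes the job in one stroke. The paper, by contrast, explicitly \emph{avoids} claiming that $Y$ is a $Q$-stochastic integral (this is even flagged in the remark preceding the proof). Instead it approximates $Z$ by simple integrands $Z^n$ in $L^2(P)$, stops them so that the resulting simple integrals stay bounded below by $-c-\varepsilon$, observes that these simple integrals are pathwise objects (hence the same under $P$ and $Q$) and are $Q$-local martingales bounded below, giving $E_Q[\int_0^T\tilde Z^n\,dS]\le 0$; Fatou under $Q$ then yields the bound for the limit. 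A second localization handles the case where $Z$ is not square-integrable under $P$.

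In effect the paper is re-deriving, in this specific setting, the change-of-measure stability that you cite as a black box. Your route is shorter and more conceptual, at the cost of importing a result that---while standard---is not entirely trivial and is exactly the point the author seems to want to make explicit. The paper's route is more self-contained and sidesteps any discussion of whether the $\mathbb{F}$-progressive aggregated integral really coincides with the $\mathbb{F}^Q$-integral, but it is longer and requires two layers of approximation (simple processes plus localization).
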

\begin{remark}
	Note that in the above lemma, $Y\in G$ is not necessarily a $Q$-stochastic integral.
\end{remark}
\begin{proof}[of Lemma \ref{lem:Q-Y}]
	Let $Y\in G$ and $Q$ be a local martingale measure for $S$.
	Let $P\in {\cal P}$ be such that $Q\ll P$, let $c>0$ and $Z$ an $S$-integrable process such that $Y_t:=\int_0^tZ_u\,dS_u\ge -c$.
	Recall that a process $H:[0,T]\times \Omega\to \mathbb{R}$ is called simple if it is of the form
	\begin{equation*}
	 	H_t(\omega)= \sum_{i=1}^Nh_i(\omega)1_{(\tau_i(\omega), \tau_{i+1}(\omega)]}
	 \end{equation*}
	 where $N\in \mathbb{N}$, $0\le \tau_1\le \cdots\le \tau_{N+1}\le T$ are $\mathbb{F}$-stopping times and $h_i$ are ${\cal F}_{\tau_i}$-measurable bounded functions.

	Let us first assume that $E_P[\int_0^T|Z_u|^2\,du]<\infty$.
	Then, there is a sequence $(Z^n)$ of simple processes such that $\int_0^T Z^n_u\,dS_u \to \int_0^TZ_u\,dS_u$ in $L^2(P)$.
	Fix $\varepsilon>0$, and define the sequence of stopping times
	\begin{equation*}
		\tau^n:= \inf\left\{t>0: \int_0^tZ^n_u\,dS_u \le -c-\varepsilon\right\}\wedge T,
	\end{equation*}
	with the convention $\inf\emptyset:= +\infty$.
	Further put $\tilde{Z}^n:= Z^n1_{[0,\tau^n]}$.
	By definition, $\int_0^t\tilde{Z}_u^n\,dS_u = \int_0^{t\wedge \tau^n}Z^n_u\,dS_u\ge - c - \varepsilon$.
	For almost all $\omega \in \Omega$, there is $N \in \mathbb{N}$ such that if $n \ge N$, then $\int_0^tZ^n_u\,dS_u(\omega)\ge \int_0^tZ_u\,dS_u(\omega)-\varepsilon \ge -c-\varepsilon$.
	Thus, $\tau^n(\omega) =T$, i.e. $\tau^n\uparrow T$ $P$-a.s.
	Hence, $\int_0^T\tilde{Z}^n_u\,dS_u = \int_0^{\tau^n}Z^n_u\,dS_u$ converges to $\int_0^TZ_u\,dS_u$ $P$-a.s. and $Q$-a.s.
	In addition, since $\tilde{Z}^n$ is a simple process, $\int \tilde{Z}^n\,dS$ is a $Q$-martingale, so that $E_Q[\int_0^T\tilde{Z}^n_u\,dS_u] = 0$.
	Therefore, it follows from Fatou's lemma that 
	\begin{equation*}
		0= \liminf_{n \to \infty}E_Q\left[\int_0^T\tilde{Z}^n_u\,dS_u\right] \ge E_Q\left[ \int_0^TZ_u\,dS_u\right].
	\end{equation*}
	
	In the general case, let $\sigma^k$ be a localizing sequence such that $\int_0^{\sigma^k\wedge\cdot}Z\,dS$ is a square integrable $P$-martingale.
	Put $Z^k:= Z1_{[0,\sigma^k]}$.
	One has $\int_0^tZ^k_u\,dS_u\ge -c$ for all $k$.
	By the first part of the proof, for each $k$, it holds $ E_Q[\int_0^TZ^k_u\,dS_u]\le 0$.
	Taking the limit in $k$, it follows by Fatou's lemma that $E_Q[\int_0^TZ_u\,dS_u]\le 0$.
\end{proof}

\begin{proof}[of Theorem \ref{thm:psiROCE}]
	It is clear that the set ${\cal A}$ contains $0$, is convex and monotone.
	Moreover, in view of Theorem \ref{thm:main}, Lemmas \ref{lem:A.UI} and \ref{lem:A.sublevel}, the representation can be obtained if we show that $X:=\liminf_{n\to \infty}X^n\in{\cal A}$ for every sequence $(X^n) $ in $ {\cal A}$ that is bounded in $L^1({\cal P})$ and that
	\begin{equation}
	\label{eq:psi_ub_proof}
		\psi^\ast(Q)= \psi^\ast_{U_b}(Q):=\sup_{X \in U_b}(E_Q[X]-\psi(X)).
	\end{equation}

	For every $n \in \mathbb{N}$, there is 
	$m^n\in \mathbb{R}$ such that
	$$E_P[l(m^n -X^n)] - m^n-1/n\le 0 \quad \text{for all }  P \in {\cal P}.$$
	Condition (CIB) ensures that $l(x)\ge bx + c$ and $l(x) \ge b'x + c$ for all $x \in \mathbb{R}$ for some $b>1>b'$ and $c \in \mathbb{R}$.
	Since $(X^n)$ is bounded in $L^1({\cal P})$, this shows that $(m^n)$ is bounded.
	Thus, there is $m$ such that $(m^n)$ converges to $m$ after passing to a subsequence.
	Hence, it follows from Fatou's lemma and continuity of $l$ that $E_P[l(m - X)] - m \le 0$.
	Since this holds for every $P\in {\cal P}$, it follows $\rho(X) \le 0$, i.e. $X \in {\cal A}$.

	Let us now prove \eqref{eq:psi_ub_proof}.
	It follows from Theorem \ref{thm:main} and Lemma \ref{lem:A.sublevel} that
	\begin{equation}
	\label{eq:firstinequal_es}
		\psi^\ast(Q)= \phi^\ast(Q) +\inf_{P\in {\cal P}}E_P[l^*(dQ/dP)].
	\end{equation}	
	Observe that by definition, $\psi^* \le \psi^*_{U_b}$.
	Let $Q \in ca^+_1(\tilde \Omega)$.
	Assuming that $\inf_{P \in {\cal P}}E_P[l^*(dQ/dP )]=\infty$, then by Lemma \ref{lem:A.sublevel}, $\psi^*(Q)\ge \rho^*(-Q)= \infty$.
	If $S$ is not a $Q$-local martingale, then since $\text{supp}(Q)\subseteq \tilde{\Omega}$ and $\tilde \Omega$ is a subset of a $\sigma$-compact set, it follows from \cite[Remark 4.1 and Proposition 4.4]{liminf} that there is $X \in C_b$ and an $S$-integrable process $Z$ such that $X\le M^Z$ and $E_Q[X]>0$.  
	Thus, one has $\phi(xX)\le 0$ for all $x\ge 0$ and by $X\in C_b$ it holds $\psi^*(Q)\ge \phi^*(Q)\ge E_Q[xX]$ for all $x\ge 0$.
	This shows by scaling that $\psi^*(Q) = \phi^*(Q) = \infty$.
	Thus, $\infty=\psi^*(Q) \le \psi^*_{U_b}(Q)$ for all $Q \notin {\cal M}^l_{\cal P}(S)$.

	On the other hand, it can be checked that $\rho$ satisfies the weak duality 
	$$\rho(X) \ge \sup_{Q}(E_Q[-X] - \inf_{P \in {\cal P}}E_P[l^*(dQ/dP)])\quad \text{for all }X \in L^1({\cal P}).$$
	Let $Q \in {\cal M}^l_{\cal P}(S)$, and $X \in U_b$ such that $m + Y -X \in {\cal A}$ for some $m\in \mathbb{R}$ and $Y \in G$.
	It holds $E_Q[-m-Y+X]-\rho^*(-Q)\le 0$ and there is $P \in {\cal P}$ such that $Q\ll P$.
	By Lemma \ref{lem:Q-Y}, we have $E_Q[Y]\le 0$, i.e., $E_Q[X]-m\le \rho^*(-Q)$.
	This implies $\psi^*_{U_b}(Q)\le\rho^*(-Q)\le \psi^*(Q)$.
	Therefore, $ \psi^*=\psi^*_{U_b}$.

	Finally, recall that $\psi^*(Q) = \phi^*(Q) + \rho^*(-Q)$ for every $Q$ and $\phi^*(Q) = 0$ for $Q \in {\cal M}^l_{\cal P}(S)$.
	This concludes the proof. 
\end{proof}

\begin{appendix}
\section{Separable class of diffusion coefficients}
In this appendix we define the classes of diffusion coefficients we consider.
The definitions below as well as the proposition are taken from \citet{STZ10}.
Let $\bar{\cal P}_W$ be the set of local martingale measures $P$ of $S$ such that $P$-a.s., $\langle S\rangle_t$ is absolutely continuous in $t$ and $\hat{a}$ takes values in $\mathbb{S}^{>0}_d$, put
\begin{equation*}
	\bar{A} := \left\{a:\mathbb{R}_+\to \mathbb{S}^{>0}_d,\,\, \mathbb{F}\text{-progressively measurable and } \int_0^t|a_s|\,ds<\infty \text{ for all } t\ge 0 \right\}
\end{equation*}
and for each $P \in \bar{\cal P}_W$, $\bar{A}_W(P):= \{a \in \bar{A}: a = \hat{a}\, P\text{-a.s.}\}$.
Denote by $\bar{A}_W$ the set $\bar{A}_W:=\cup_{P \in \bar{\cal P}_W}\bar{A}_W(P)$ and by $A_W$ the set of elements of $\bar{A}_W$ such that the SDE \eqref{eq:sde} has weak uniqueness.
\begin{definition}
\label{def:generating}
	A subset $A_0$ of $A_W$ is called a generating class of diffusion coefficients if
	\begin{itemize}
	\item[(i)] $A_0$ satisfies the concatenation property: $a1_{[0,t)} + b1_{[t,\infty)} \in A_0$ for all $a,b \in A_0$ and $t\ge0$.
	\item[(ii)] $A_0$ has constant disagreement times: for all $a,b \in A_0$, $\theta^{a,b}$ is a constant, with $\theta^{a,b}:\inf\{t\ge 0:\int_0^ta_s\,ds \neq \int_0^tb_s\,ds\}$.
	\end{itemize}
\end{definition}
\begin{definition}
\label{def:separable}
	A set $A$ is a separable class of diffusion coefficients generated by $A_0$ if $A_0\subseteq A_W$ is a generating class of diffusion coefficients and $A$ consists of all processes of the form
	\begin{equation*}
		a = \sum_{n = 0}^\infty\sum_{i=1}^\infty a^n_i1_{E^n_i}1_{[\tau_n, \tau_{n+1})},
	\end{equation*}
	where $(a^n_i)_{i,n}\subseteq A_0$, $(\tau_n)_n$ is an increasing sequence of $\mathbb{F}$-stopping times valued in $\mathbb{R}_+\cup {\{\infty\}}$ with $\tau_0 = 0$ and
	\begin{itemize}
	\item[(i)] $\inf\{n:\tau_n=\infty\}<\infty$, $\tau_n<\tau_{n+1}$ whenever $\tau_n<\infty$, and each $\tau_n$ takes at most countably many values.
	\item[(ii)] for each $n$, $\{E^n_i, i\ge 1\}\subseteq {\cal F}_{\tau_n}$ forms a partition of $\Omega$.
	\end{itemize}
\end{definition}
\begin{proposition}
	Let $ A$ be a separable class of diffusion coefficients generated by $A_0$.
	Then, $A\subseteq A_W$ and if for all $a \in A_0$ $P^a$ satisfies the martingale representation property, then for all $a \in A$ $P^a$ satisfies the martingale representation property as well.
\end{proposition}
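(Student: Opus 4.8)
The statement is essentially a structural fact from \cite{STZ10}, and the natural plan is a pasting (concatenation) argument. A given $a\in A$ has the form $a=\sum_n\sum_i a^n_i 1_{E^n_i}1_{[\tau_n,\tau_{n+1})}$ and, by Definition~\ref{def:separable}(i), on any finite horizon this is an effectively finite sum, so the construction of the associated measure $P^a$ and the verification that $a\in A_W$ both reduce to a finite iteration of one elementary step: at an $\mathbb{F}$-stopping time $\tau$ taking at most countably many values, split on a partition $\{E_i\}\subseteq{\cal F}_\tau$ and, on each $E_i$, graft the weak solution driven by some $a_i\in A_0$ onto the solution already constructed up to $\tau$. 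The base case is $a\in A_0\subseteq A_W$, where weak existence and uniqueness for bounded coefficients valued in $\mathbb{S}^{>0}_d$ is the classical fact recalled via \cite{karatzas01}, and the martingale representation property (MRP) holds by assumption.

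For the existence half of the elementary step, I would take a regular conditional probability distribution $(P_\omega)_\omega$ of the already-constructed measure given ${\cal F}_\tau$ and, on each $E_i$, replace $P_\omega$ by the law of the weak solution driven by $a_i$ started from the path $\omega$ frozen at $\tau(\omega)$; gluing these pieces yields a probability measure $\tilde P$ on $\Omega$. That $\tilde P$ solves the martingale problem for the grafted coefficient --- i.e. $S$ is a $\tilde P$-local martingale with $d\langle S\rangle=a\,dt$ on the relevant interval --- follows from a Stroock--Varadhan-type concatenation check, the relevant test processes being martingales up to $\tau$ under the old measure and, conditionally on ${\cal F}_\tau$, after $\tau$ under each $P^{a_i}$. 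Here the concatenation property of $A_0$ (Definition~\ref{def:generating}(i)) is what keeps the grafted coefficient inside $A$, while the constant-disagreement-times property (Definition~\ref{def:generating}(ii)), together with the $\mathbb{F}^S$-adaptedness of the quadratic-variation density $\hat a$ from Karandikar's theorem, is what makes the random times appearing in Definition~\ref{def:separable} genuine $\mathbb{F}$-stopping times and the maps $\omega\mapsto P^{a_i}_\omega$ measurable, so that the pasting is unambiguous.

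Weak uniqueness follows from the same recursion: if $P$ and $P'$ both solve the martingale problem for $a$, then on $[0,\tau)$ they solve it for a coefficient admitting a unique solution (induction hypothesis), so conditioning on each of the countably many values of $\tau$ and localizing forces $P=P'$ on ${\cal F}_\tau$; then conditioning on ${\cal F}_\tau$ and using the weak uniqueness of each $P^{a_i}$ on $E_i$ identifies the regular conditional probabilities of $P$ and $P'$, whence $P=P'$, and iterating over the finitely many steps gives $a\in A_W$. For propagation of MRP, assume every $P^a$ with $a\in A_0$ has it, fix $a\in A$ with $P^a$ constructed as above, and let $M$ be a $P^a$-local martingale for the $P^a$-augmentation of $\mathbb{F}^S$; decomposing $M$ along the intervals $[\tau_n,\tau_{n+1})$ and the partitions $\{E^n_i\}$ via regular conditional probabilities, on each piece $M$ behaves as a local martingale under some $P^{a^n_i}$ and hence equals $\int H^{n,i}\,dS$ there, and since $S$ is continuous and $P^a$ is assembled from continuous weak solutions, $M$ is continuous --- in particular it has no jump at the gluing times --- so the integrands splice into a single $S$-integrable $H$ with $M=M_0+\int H\,dS$.

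The main obstacle is the measure-theoretic bookkeeping: verifying the measurability of the regular conditional probability distributions and of the maps $\omega\mapsto P^{a_i}_\omega$ so that $P^a$ is well defined, checking that the glued measure genuinely solves the martingale problem for the path-dependent piecewise coefficient, and --- for the MRP part --- handling the filtrations carefully across the gluing times $\tau_n$ (the $P^a$-augmented $\mathbb{F}^S$ against the universal filtration $\mathbb{F}$) so that the interval-by-interval representations assemble into one global representation without a defect at the $\tau_n$. This is exactly where the structural hypotheses on $A_0$ are used, and it is carried out in detail in \cite{STZ10}.
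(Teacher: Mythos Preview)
Your proposal is correct, and indeed the paper gives no proof of this proposition at all: it is stated in the appendix with the explicit remark that both the definitions and the proposition are taken from \cite{STZ10}. Your pasting/concatenation sketch is precisely the argument of that reference, so there is nothing to compare beyond noting that you have supplied more detail than the paper itself does.
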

\end{appendix}


 \vspace{1cm}

\noindent Ludovic Tangpi: Department of Operations Research and Financial Engineering, Princeton University, Princeton, 08540, NJ;  USA.\\
{\small\textit{E-mail address:} ludovic.tangpi@princeton.edu}.

\end{document}